\documentclass[final]{llncs}

\usepackage{amsmath}
\usepackage{fancyhdr}
\usepackage{relsize}
\usepackage[scr=rsfs]{mathalpha}

\iftrue
\usepackage[obeyFinal]{todonotes}
\else
\usepackage[]{todonotes}

\usepackage{catchfile}
\fi

\newcommand{\nke}[1]{\todo[inline,color=green!20!white]{#1 ---Nandakumar}}

\newcommand{\trtO}[0]{\ensuremath{0_t}}
\newcommand{\trtI}[0]{\ensuremath{1_t}}
\usepackage{algorithm}
\usepackage{algpseudocode}
\usepackage{amssymb}
\usepackage[maxbibnames=99]{biblatex}
\usepackage{booktabs}
\usepackage{comment}
\usepackage{hyperref}
\usepackage{listings}
\usepackage{graphicx}
\usepackage{pgfplots}
\usepackage{tcolorbox}

\newcommand{\firstlink}[3]{
  \ifcsname #1 \endcsname
  #2
  \else \expandafter\newcommand\csname #1 \endcsname[0]{} \href{#3}{#2}
  \fi
}

\newcommand{\rocq}[0]{\firstlink{rocq-linked}{Rocq}{https://rocq-prover.org/}}
\newcommand{\ebpf}[0]{\firstlink{ebpf-linked}{eBPF}{https://ebpf.io/}}

\usepackage[color]{coqdoc}

\newcommand{\Bit}[0]{\ensuremath{\mathbb{B}}}
\newcommand{\Lshift}[2]{\ensuremath{\mathrm{lshift}_{#1}\left(#2\right)}}
\newcommand{\Bvec}[1]{\ensuremath{\Bit^{#1}}}
\newcommand{\Trit}[0]{\ensuremath{\mathbb{T}}}
\newcommand{\tle}[0] {\ensuremath{\subseteq_t}}

\newcommand{\Tnum}[1]{\ensuremath{\Trit^{#1}}}

\newcommand{\Cdom}[0]{\ensuremath{\mathscr{C}}}

\newcommand{\bit}[2]{\ensuremath{#2_{#1}}}

\newcommand{\Case}[2]{\textbf{Case #1:} #2}
\newcommand{\basecase}[1]{\textbf{Base Case:} #1}
\newcommand{\inductionStep}{\textbf{Induction Step.}}
\newcommand{\wfdomn}{\mathbb{V}^{n}}

\newcommand{\ith}[1]{#1[i]}
\newcommand{\lsb}[1]{#1[0]}
\newcommand{\tv}[1]{#1.v}
\newcommand{\tm}[1]{#1.m}

\newcommand{\ithv}[1]{\ith{\tv{#1}}}
\newcommand{\ithm}[1]{\ith{\tm{#1}}}

\newcommand{\ingamma}[2]{#1 \in \gamma(#2)}

\newcommand{\tadd}[2]{tnum\_add(#1, #2)}
\newcommand{\tunion}[2]{tnum\_union(#1, #2)}

\newcommand{\seti}[1]{set(#1, i)}

\newcommand{\sv}[2]{(\tv{#1} + \tv{#2})}
\newcommand{\sm}[2]{(\tm{#1} + \tm{#2})}

\newcommand{\orvm}[1]{(\tv{#1} \mathbin{|} \tm{#1})}

\newcommand{\rocqlisting}[1]{\addvspace{1em} \coqdocnoindent #1\addvspace{1em}}

\renewcommand{\coqdockw}[1]{{\color{\coqdockwcolor}{\texttt{#1}}}}
\renewcommand{\coqdocvar}[1]{{\color{\coqdocvarcolor}{\texttt{#1}}}}
\renewcommand{\coqdoccst}[1]{{\color{\coqdoccstcolor}{\texttt{#1}}}}

\renewcommand{\coqdocmod}[1]{{\color{\coqdocmodcolor}{\textsl{\texttt{#1}}}}}

\renewcommand{\coqdocax}[1]{{\color{\coqdocaxcolor}{\textsl{\texttt{#1}}}}}

\renewcommand{\coqdocind}[1]{{\color{\coqdocindcolor}{\textbf{\texttt{#1}}}}}

\renewcommand{\coqdocconstr}[1]{{\color{\coqdocconstrcolor}{\texttt{#1}}}}

\renewcommand{\coqdoctac}[1]{{\color{\coqdoctaccolor}{\texttt{#1}}}}

\makeatletter
\renewcommand{\ALG@beginalgorithmic}{\small}
\makeatother

\spnewtheorem{ourclaim}{Claim}{\bfseries}{\itshape}
\spnewtheorem{observation}{Observation}{\bfseries}{\itshape}
\spnewtheorem*{nonumhypo}{Hypothesis}{\bfseries}{\itshape}
\spnewtheorem*{nonumremark}{Remark}{\bfseries}{\itshape}

\bibliography{ref}{}

\ifdefined\doubleblind
\bibliography{ref-blind}{}
\else
\bibliography{ref-unblind}{}
\fi

\begin{document}
\title{Improved Tristate Multiplication With Formalization in Rocq}

\ifdefined\doubleblind
\author{Author(s) hidden for double-blind review}
\else
\author{Nandakumar Edamana, Piyush P Kurur, Unnikrishnan Cheramangalath}
\institute{Indian Institute of Technology Palakkad, India\\
  \email{112314001@smail.iitpkd.ac.in},\\
  \email{\{ppk, unnikrishnan\}@iitpkd.ac.in}}
\fi

\maketitle

\begin{abstract}
  We give a new multiplication algorithm for tristate numbers
  improving upon the state-of-the-art implementation from the Linux
  kernel in terms of precision and formal proofs. Our algorithm is
  significantly more precise as shown by experimental evaluation (at
  the peak, giving better results in $95.59\%$ cases compared to the
  previous work for 31-bit samples). Importantly, we achieve this
  additional precision with performance comparable to the previous
  algorithm, as demonstrated by benchmarks. Finally, we formalize and
  prove the soundness of the algorithm in the \rocq{} proof assistant,
  adding to the trust in the resulting implementation. Our algorithm
  is now part of the upstream Linux kernel.

  Our soundness proof in \rocq{} for the new multiplication
  algorithm works for all bit widths, while the SAT/SMT-based
  machine-checked proof accompanying the previous algorithm was
  restricted to 8 bits. We also provide \rocq{} proofs for the
  \emph{soundness} and \emph{optimality} of the newly added tnum union
  operation and the existing tnum addition algorithm from the Linux
  kernel. Our optimality proof for tnum addition presents a simpler
  and straightforward lemma compared to prior work.

  \keywords{Abstract interpretation \and Formal verification \and
    Kernel extensions \and eBPF \and Rocq}
\end{abstract}

\nke{Git: \input{gitinfo}}
\nke{TACAS 2027 Deadline: Oct 15, 2026 (AoE) = Oct 15, 2026 17:30 IST}
\nke{TACAS 2027 Page limit: 18 pages}

\iftrue
\else
\section*{TODO Address SAS26 Review}
\begin{enumerate}
\item
  The tnum domain forms a Galois insertion ($\alpha \circ \gamma$ = id, as the authors note), yet this standard terminology is never used.

\item Other concerns regarding the Background section

\item inlcude links to the repo (first, make sure to specify the
  license in the repo)

\item Compare our tnum\_union with prior literature suggested by the reviewers

\item Artifact: cycle count estimation

\item
  The wrap-around results in the loss of higher bits in the
  implementations of both the new and old algorithms. But for a randomly
  picked 64-bit number, the odds of it having non-zero bits at higher
  positions (MSB-side) is very high. This is why both algorithms show
  the same relative precision at 64 bits.

\end{enumerate}

\section*{Other TODO}
\begin{enumerate}
\item
  IMPORTANT: when using ebpf-samples from prevail: .elf files are also available now (not just .o files)
\end{enumerate}

\section*{Present in the Rocq Code, Missing Here}
\begin{enumerate}

\item \lstinline{tnum_union_optimal_by_subset}

\item \lstinline{tnum_sub_sound}, modulo proof for the full subtractor
equation

\end{enumerate}

\section*{Changes Made After SAS26 Review}
\begin{itemize}
\item Mention union-related proofs, updated addition-related
  proofs, etc. in the abstract and the intro (contribs).

\item Drop the \lstinline{tnum_add} optimality proof from Harishankar
  et al. and use our version.

\item For soundness, simplify some lemmas by defining
  \lstinline{sound2}, a common soundness predicate for 2-ary tnum
  operations.

\item Add some explanation regarding the proofs (not just the
  lemmas).

\item Start with the mul-related proofs and move the addition-related
  ones to the last.

\item Address SAS26 review concern: \lstinline{nat} is not defined.

\item Add remark: references like \lstinline{tnum_mul} could mean the
  function, \rocq{} code, or the algorithm.

\item Address CAV26/SAS26 review concern: make it clear that the code
  is/will be publicly available.

\item Add stats: mul usage in real-world eBPF programs.

\item Address CAV26/SAS26 review concern: theoretical complexity is
  not mentioned.

\item Fix: missing note on Rocq encoding being manual.

\item Abstract: \emph{correctness} $\to$ \emph{soundness}

\item Move the importance of eBPF back into the intro from Related
  Works (similar to the first draft).

\item Fix: duplicate/alt definition of tnum union and its soundness.

\item Add: how \lstinline{tnum_union} works.

\item Partially rewrite the section Tnum Arithmetic in the Linux
  Kernel. The main change is to move from an array-of-structs view to
  a struct-of-arrays view, because that is how both Linux and our
  proofs do it.
\end{itemize}
\fi 

\setcounter{tocdepth}{10}
\listoftodos{}

\section{Introduction}

A \emph{trit} is an abstraction for a bit that is either 0, 1, or of
uncertain state (the last one denoted by $\mu$). \emph{Tnums} of
length $n$ are sequences of $n$-trits and represent $n$-bit numbers
with bits at some positions uncertain. Functions in general and
arithmetic operations in particular on $n$-bit numbers have their
corresponding variants for tnums which take into consideration these
uncertain bit positions.

In the framework of \emph{abstract
interpretation}~\cite{CousotCousot76-1}, tnums together with the
associated arithmetic operations form an \emph{abstract domain} and
are used to perform various semantic analyses like bounds checking on
programs.

A major beneficiary of tristate numbers is the Linux kernel \ebpf{}
verifier, which performs safety and security checks on user-provided
\ebpf{} programs before they are executed directly in the kernel
space. Tracing and instrumentation of a large portion of the kernel,
including the network stack, can be performed by \ebpf{} programs. Due
to these capabilities, Linux-based \ebpf{} is widely used by
hyperscalers for improved visibility~\cite{ebpfCaseStudy1,
  ebpfCaseStudy2}, performance~\cite{ebpfCaseStudy3}, and
efficiency~\cite{ebpfCaseStudy4, ebpfCaseStudy5}.  \todo{\ebpf{} for
  Windows~\cite{ebpfWindows} is a work in progress.} The \ebpf{}
framework is also being explored as a means for implementing core
kernel components like the task
scheduler~\cite{ebpfSchedFree5gc}. Therefore, sound and fast
implementations of tnum arithmetic is crucial to the reliability of
any deployment that makes use of \ebpf{}.

Our work focuses on algorithms for tnum arithmetic, thus directly
concerning the Linux kernel \ebpf verifier. We make the following
contributions:

\begin{description}
\item[New multiplication algorithm:] We present a novel algorithm for
  tnum multiplication that is simpler in description, equally
  efficient in running time, and importantly, more precise than the
  previously known state-of-the-art algorithm~\cite{hariTnum} as shown
  by experimental evaluation.

  \begin{itemize}
  \item At 8-bit width, our algorithm is \emph{optimal} for 76\% of
    the cases, whereas the previous algorithm was optimal for 22\%
    of the cases only (we considered all possible tnum pairs).
  \item Our algorithm is \emph{more precise} than the previous one in
    69.8\% of the cases when tested for all possible 8-bit tnum
    pairs. At 31-bit width, it is more precise in 95.6\% of the cases
    when tested against 10 million random tnum pairs.
  \item We found our algorithm to be more precise at higher bits in
    general. This is a desirable property for applications involving
    bounds checking since the place values of these bits are higher.
  \end{itemize}

  Real-world eBPF programs involve significant use of 32-bit
  multiplication (see Section~\ref{sec:mul-usage-stats}). The
  improvement in precision enables the Linux kernel \ebpf{} verifier
  to accept more number of valid programs, as demonstrated in
  Appendix~\ref{apx:test-program}.

\item[Contribution to the Linux Kernel:] Our C implementation of the
  new tnum multiplication algorithm is now part of the upstream Linux
  kernel, starting from Linux 6.18.

\item[Formal proofs:] In terms of formal guarantees, we formulate
  our multiplication algorithm and its dependencies in
  \rocq{}. We provide:

  \begin{itemize}
  \item The \emph{soundness} (see Definition~\ref{def:sound}) proof of
    the new multiplication algorithm
  \item Mechanization of the soundness proof for the existing tnum
    addition algorithm, loosely based on prior work~\cite{hariTnum}
  \item A simpler and straightforward optimality (see
    Definition~\ref{def:optimal}) proof for tnum addition
  \item Soundness and optimality proofs for the tnum union algorithm
    newly added to the Linux kernel.
  \end{itemize}

  Previous attempts at machine-checked formal proofs were based on
  SAT/SMT solvers. In the case of soundness of multiplication, the SMT
  solver was unable to scale beyond 8 bits, whereas the kernel version
  uses 64-bit integers~\cite{hariTnum}. Our proofs, on the other hand,
  works for all lengths and thus fill an important gap in the formal
  proofs of tnum arithmetic.
\end{description}

\section{Background}

We use $\Bit = \{ 0 , 1 \}$ to denote the set of bits. For a natural
number $N$, and a bit-position $i \geq 0$, we use $\bit{i}{N}$ to
denote the $i$-th bit in the binary representation of $N$, i.e.,
\[ \bit{i}{N} = \lfloor N/2^i \rfloor \mod 2. \]

An $n$-bit vector is an element of the set $\Bvec{n}$. As convention,
we use bold-faced lower case letters like $\mathbf{x}$ for $n$-bit
vectors and $x_i$ for its $i$-th bit, for every position $i$. In this
article, we associate the bit vector $\mathbf{x}$ to the natural
number $\natural \mathbf{x} = \sum_i 2^i x_i$.  Conversely, every
natural number $N$ can be associated with the~\todo{Just one? What about 0-prefixed ones?} $n$-bit~\todo{Which $n$ is this?} vector whose
$i$-th bit is $\bit{i}{N}$~\todo{For all $i$? But in what range?}. Two natural numbers $N$ and $M$ are
associated to the same bit vector if they fall in the same residue
class modulo $2^n$. As far as the set of natural numbers less than
$2^n$ is concerned, we often identify them with their associated
$n$-bit vector.

A \emph{tristate bit}, or \emph{trit} for short, is a bit which might
be uncertain. Mathematically, a trit is an element of the set $
\mathbb{T} = \{ \trtO, \trtI, \mu\}$ where $\mu$ denotes
uncertainty. Analogous to $n$-bit vectors, the $n$-length
\emph{tristate numbers}, or \emph{tnums} for short, is the set
$\Tnum{n}$. As convention, we use bold-faced upper case letters like
$\mathbf{P}$ for $n$-length tnums and use $P_i$ to denote the
associated trit at position $i$. Thus a tnum $\mathbf{P}$ can be seen
as a bit vector with uncertainties at positions $i$ where $P_i =
\mu$.

A trit can also be thought of as a \emph{pattern} to \emph{match} a
single bit where $\trtO$ and $\trtI$ matches $0$ and $1$ respectively
and $\mu$ can match either of the two. Alternatively, a trit can be
associated with the nonempty subset of bits that it \emph{matches}. In
other words, $\trtO$, $\trtI$, and $\mu$ can be identified with the
non-empty subsets $\{0\}$, $\{1\}$, and $\{0,1\}$ respectively. Thus,
the set $\Trit$ is the set $2^\Bit \setminus \{ \emptyset \}$ of
nonempty subsets of $\Bit$.

The matching relation can be extended to $n$-length tnums as follows.
A tnum $\mathbf{P} \in \Tnum{n}$ matches $\mathbf{x} \in \Bvec{n}$ if
$x_i \in P_i$ for every position $i$. Finally, we extend this matching
relation to natural numbers. A tnum $\mathbf{P} \in \Tnum{n}$
matches the the natural number $N$ if it matches the associated
$n$-length bitvector or equivalently $N_i \in P_i$ for every position
$i$.

\subsection{The Abstract Domain of Tnums}
\label{sec:tnum-absdom}

The matching relation described above allows us to think of elements
of the set $\Tnum{n}$ as subsets of natural numbers. Within the
framework of abstract interpretation, the set $\Tnum{n}$ is an
abstract domain with the set $\Cdom = 2^\mathbb{N} \setminus \{
\emptyset \}$ of all nonempty subsets of natural numbers serving as
the associated \emph{concrete domain}. The association between these
two domains is formalized using the \emph{abstraction} and
\emph{concretization} functions defined below.

\begin{description}
\item[Abstraction function:] For every position $i$, define the
  function $\alpha_i : \Cdom \to 2^\Bit$ as follows:
  \[ \alpha_i(S) = \{ \bit{i}{N} \ | \  N \in S \}. \]

  Notice that every subset $S \in \Cdom$ is non-empty and hence
  $\alpha_i(S)$ is either the set $\{0\}$, $\{ 1 \}$ or $\{ 0 , 1\}$
  which we identify with the trits $\trtO$, $\trtI$, and $\mu$
  respectively. Thus $\alpha_i$ can be seen as a function from $\Cdom$
  to $\Trit$. For a subset $S \in \Cdom$ we define $\alpha (S)$ as
  that tnum whose $i$-th trit is $\alpha_i(S)$. In other words, the
  abstraction function $\alpha : \Cdom \to \Tnum{n}$ is defined as
  \[ \alpha(S) = (\alpha_0(S),\ldots, \alpha_{n-1}(S)). \]

\item[Concretization function:] The function $\gamma: \Tnum{n} \to
  \Cdom$ maps a given tnum $\mathbf{P} \in \Tnum{n}$ to the set of all
  natural numbers corresponding to bit vectors matching
  $\mathbf{P}$. More precisely,
  \[ \gamma (\mathbf{P}) = \{ N \in \mathbb{N} \ | \ \mathbf{P} \ \textrm{matches} \ N \}. \]

\end{description}

The concept of abstraction and concretization is illustrated in
Figure~\ref{fig:eg-tnum-alpha-gamma} with an example. While
$\alpha(\gamma(\mathbf{P}))$ is always $\mathbf{P}$, notice from the
example that $\gamma(\alpha(C))$ can be a superset of $C$.

\begin{figure}
  \begin{center}
    \begin{tabular}{c|c|c}
      Concrete set & Abstraction & Concretization \\
      $C = \{ 0011, 0101, 0111 \}$ & \
      $\mathbf{P} = \alpha(C) = 0\mu\mu1$ & \
      $\gamma(\mathbf{P}) = \{ 0001, 0011, 0101, 0111 \}$
    \end{tabular}
  \end{center}

  \caption{An example illustrating abstraction and concretization}
  \label{fig:eg-tnum-alpha-gamma}
\end{figure}

The set theoretic flavor of the abstract domain $\Tnum{n}$ formalized
through the concretization function $\gamma$ means that the subset
relation on the concrete domain $\Cdom$ has a natural counter part on
the abstract domain.

\begin{definition}[Inclusion order]
  The inclusion order $\tle$ on $\Tnum{n}$ is defined as follows. For
  tnums $\mathbf{P}$ and $\mathbf{Q}$, $\mathbf{P} \tle \mathbf{Q}$
  \emph{if and only if} for every position $i$, the
  corresponding trits as subsets satisfy the inclusion $P_i \subseteq
  Q_i$.
\end{definition}

It is easy to see that for all tnums $\mathbf{P}$ and $\mathbf{Q}$, we
have $\mathbf{P} \tle \mathbf{Q}$ if and only if $\gamma(\mathbf{P})
\subseteq \gamma (\mathbf{Q})$.

In the context of abstract interpretation, we replace the concrete
operator $\star$ on the concrete domain $\Cdom$ by a \emph{sound
approximation} $\star'$ on the abstract domain $\Tnum{n}$.

\begin{definition}[Sound approximation]\label{def:sound}
  Let $\star$ be any operator on $\Cdom$. We say an operator $\star'$
  on $\Tnum{n}$ is a \emph{sound approximation} if for all tnums
  $\mathbf{P}$ and $\mathbf{Q}$ in $\Tnum{n}$, we have

  \[ \gamma(\mathbf{P}) \star \gamma(\mathbf{Q})  \subseteq \gamma\left( \mathbf{P} \star' \mathbf{Q} \right). \]
\end{definition}

A tnum is used to keep track of uncertain bits in a bit vector. Our
definition of soundness captures the intuition that using a larger set
(i.e., larger uncertainty) for analysis will not compromise correctness
though it could lose out on precision.  Soundness, while necessary for
the correctness, is not always sufficient to get meaningful
results. For example, the trivial operator that maps all pairs of
tnums to a sufficiently long tnum with $\mu$ in all positions is a sound
approximation of any operator $\star$ though a useless one. We would
like to use the most accurate approximation whenever possible. To
formulate this notion, we first define the inclusion order on
operators as follows.

\begin{definition}[Inclusion order for operators]
  Let $\star_1$ and $\star_2$ be operators on $\Tnum{n}$, we say
  $\star_1 \tle \star_2$ if for all tnums $\mathbf{P}$ and
  $\mathbf{Q}$, $\mathbf{P} \star_1 \mathbf{Q} \tle \mathbf{P} \star_2
  \mathbf{Q}$.
\end{definition}

Since inclusion order on tnums translate to set theoretic inclusion of
the concrete domain, the ordering $\star_1 \tle \star_2$ holds if and
only if for all $\mathbf{P}$ and $\mathbf{Q}$, $\gamma(\mathbf{P}
\star_1 \mathbf{Q}) \subseteq \gamma(\mathbf{P} \star_2
\mathbf{Q})$. Based on the above inclusion ordering, we define the
optimal approximation as follows.

\begin{definition}[Optimal approximation]\label{def:optimal}
  Let $\star$ be an operator on $\Cdom$.  A \emph{sound} approximation
  $\overline{\star}$ of $\star$ on $\Tnum{n}$ is \emph{optimal} if for
  every sound approximation $\star'$ of $\star$, we have $\overline{\star}
  \tle \star'$.
\end{definition}

Optimal approximation always exists~\todo{no need to mention Galois
  connection?}. In fact, for any operator $\star$ on the concrete
domain $\Cdom$, it is easy to show that the operator
$\overline{\star}$ defined by the equation below is optimal.

\[ \mathbf{P} \mathbin{\overline{\star}} \mathbf{Q} = \alpha \left(\gamma(\mathbf{P}) \star \gamma(\mathbf{Q}) \right). \]

Consider the operator $\star$ on $\mathbb{N}$. This operator can be
\emph{lifted} to the concrete domain $\Cdom$ (also denoted by $\star$)
defined as follows:
\[ S \star T = \{ N \star M \ | \ N \in S, M \in T \}. \]

Therefore, for such an operator, by its approximation on $\Tnum{n}$ we
mean the approximation of this lifted operator.

\subsection{Tnum Arithmetic in the Linux Kernel}~\label{sec:tnum-in-linux}

The Linux kernel \ebpf{} verifier performs various safety and security
checks (such as guaranteed termination) on user-provided programs
injected into the kernel space~\cite{ebpf-runtime-arxiv}. This is
based on the symbolic execution of the \ebpf{} code. Tnum is a key
abstract domain used in this symbolic execution.

\newcommand{\tvec}[1]{\mathbf{#1}}

\newcommand{\tveci}[1]{#1_i}

The Linux kernel represents a tristate number as a \emph{value-mask}
pair. An $n$-length tnum $\tvec{P}$ is represented as a pair of two
$n$-length bit vectors called its \emph{value} and \emph{mask}
respectively. With the $i$-th trit in $\tvec{P}$ represented as
$\tveci{P}$, the value and mask fields in its value-mask
representation notated as $\tv{P}$ and $\tm{P}$, the relation between
a tnum $\tvec{P}$ and its value-mask representation $P$ is as follows:

For every position $i$,
\[
\begin{aligned}
\tveci{P} = \trtO \iff \ithv{P} = 0 \land \ithm{P} = 0 \\
\tveci{P} = \trtI \iff \ithv{P} = 1 \land \ithm{P} = 0 \\
\tveci{P} = \mu \iff \ithv{P} = 0 \land \ithm{P} = 1
\end{aligned}
\]

That is, the $i$-th bit is unknown when $\ithm{P} = 1$, but when
$\ithm{P} = 0$, $\ithv{P}$ holds a known bit value. The combination
$\ithv{P} = 1 \land \ithm{P} = 1$ at any position $i$ makes a tnum
\emph{ill-formed}. In other words, we call a tnum in the value-mask
representation \emph{well-formed} if the value bit is $0$ at every
position where the mask bit is $1$. The Linux tnum library routines do
not explicitly check for the well-formedness of tnums, assuming
\lstinline{tnum} objects passed as input arguments are always
well-formed. However, they are supposed to perform necessary
operations to make sure the returned objects are well-formed.

The Linux tnum library specifically uses a struct called
\lstinline{tnum} that has two fields \lstinline{value} and
\lstinline{mask}, both of type \lstinline{u64} (64-bit unsigned
integer).  An example for representing a tristate number using the
struct \lstinline{tnum} is given in Figure~\ref{fig:eg-struct-tnum}.

\begin{figure}
  \begin{center}
    \begin{tabular}{c|c|c}
      Concrete set & Abstraction & $\mathbf{P}$ as a value-mask pair \\
      $C = \{ 0011, 0101, 0111 \}$ & \
      $\mathbf{P} = \alpha(C) = 0\mu\mu1$ & \
      (0001, 0110)
    \end{tabular}
  \end{center}

  \caption{An example representation using \lstinline{struct tnum}}
  \label{fig:eg-struct-tnum}
\end{figure}

The structure \lstinline{tnum} is part of a single C file
library~\cite{linux-tnum-h, linux-tnum-c} implementing various
arithmetic, bitwise, and logical operations on this value-mask
representation. Our focus is on the arithmetic functions
\lstinline{tnum_add} and \lstinline{tnum_mul} respectively.  While
addition and multiplication are straightforward to implement when
tristate numbers are represented as trit vectors, they are difficult
to implement for the Linux \lstinline{tnum} type, using word-level
operations (a requirement for performance reasons). This can bee seen
from the the Linux kernel implementation of tnum addition, given in
Listing~\ref{lst:tnum-add}.

We have formalized the tnum addition algorithm used in the Linux
kernel and proved both its \emph{soundness} and \emph{optimality} (see
Definition~\ref{def:optimal}) in \rocq{} for unbounded bit widths (see
Section~\ref{sec:tnum-add-rocq}). Previously, only pen-and-paper proof
was known for unbounded bit widths\cite{hariTnum}. While this formalization is of
independent interest, the soundness of addition is also needed for the
soundness of the multiplication algorithm as it uses addition as a
subroutine.

\begin{lstlisting}[caption={Linux function defining tnum addition},label=lst:tnum-add,
    language=C]
struct tnum tnum_add(struct tnum a, struct tnum b)
{
	u64 sm, sv, sigma, chi, mu;

	sm = a.mask + b.mask;
	sv = a.value + b.value;
	sigma = sm + sv;
	chi = sigma ^ sv;
	mu = chi | a.mask | b.mask;
	return TNUM(sv & ~mu, mu);
}
\end{lstlisting}

\section{Our Algorithm for Tnum Multiplication}\label{sec:new-multiplication}

Our algorithm given in Algorithm~\ref{lst:algo-tnum-mul-new} builds
upon the shift-and-add multiplication algorithm for bit vectors. Given
two tnums $\mathbf{A}$ and $\mathbf{B}$, we have an accumulator for
holding the partial product, initialized to $\trtO$. For each bit
position $i$, if the trit $B_i$ is certain (i.e., is either $\trtO$ or
$\trtI$), we proceed like in the case of bit vectors. We add (using
tnum addition) $\Lshift{i}{\mathbf{A}}$ to the accumulator if the trit
$B_i$ is $\trtI$ and leave it intact if it is $\trtO$. However, if the
trit $B_i$ is $\mu$, then a naive implementation can lead to massive
loss in precision, as observed in the quote below.

\begin{quote}
  How can we incorporate correlation in unknown bits across partial
  products? For example, multiplying P = 11, Q = ${\mu}1$ produces the
  partial products T1 = 11, T2 = ${\mu}{\mu}0$. However, the two $\mu$
  trits in T2 are concretely either both 0 or both 1, resulting from
  the same $\mu$ trit in Q. Failing to consider this in the addition
  makes the result imprecise.

  \hfil --- \citeauthor{hariTnum}~\cite{hariTnum}
\end{quote}

The main idea behind our algorithm is to handle the correlation
mentioned above by computing for each trit $B_i$ that is $\mu$, two
possible partial products $acc_0$ and $acc_1$ corresponding to the
cases when the $i$-th bit is $0$ and $1$ respectively. Both these
values are tnums represented in the value-mask representation and the
actual accumulator is the union of $acc_0$ and $acc_1$.  We repeat the
process of shift-and-add until both the value and the mask components
of the multiplier becomes 0 (see the \lstinline{while} loop in
Algorithm~\ref{lst:algo-tnum-mul-new}).

\begin{algorithm}
  \caption{Our tnum multiplication algorithm}
  \label{lst:algo-tnum-mul-new}
  \begin{algorithmic}[1]
  \Procedure{tnum\_mul}{$a, b$}
	\State $acc \gets TNUM(0, 0)$

	\While{$a.value$ \textbf{or} $a.mask$}
		\If{$LSB(a.value) == 1$}
			\State $acc \gets tnum\_add(acc, b)$
		\ElsIf{$LSB(a.mask) == 0$}
			\State $acc \gets acc$ \Comment{No change in acc}
	        \ElsIf{$LSB(a.mask) == 1$}
			\State $acc_0 \gets acc$ \Comment{Possibility: concrete $a_i$ is 0}
			\State $acc_1 \gets tnum\_add(acc, b)$ \Comment{Possibility: concrete $a_i$ is 1}
			\State $acc \gets tnum\_union(acc_0, acc_1)$ \Comment{Final partial product}
                \EndIf

		\State $a \gets tnum\_rshift(a, 1)$
		\State $b \gets tnum\_lshift(b, 1)$
       \EndWhile

       \State \textbf{return} acc
  \EndProcedure

  \Procedure{tnum\_union}{$a, b$}
	\State $v \gets a.value$ \& $b.value$
	\State $m \gets (a.value \wedge b.value) | a.mask | b.mask$

	\State \textbf{return} $TNUM(v $ \& $\sim m, m)$
  \EndProcedure
  \end{algorithmic}
\end{algorithm}

\subsubsection{Union Operation.}
The union of two tnums is an abstraction of the unions of the concrete
sets they represent. Mathematically,

\begin{definition}[Union]
  \label{def:tnum-union}
  For tnums $\mathbf{P}$ and $\mathbf{Q}$,
\[
\mathbf{P} \cup \mathbf{Q} = \alpha(\gamma(\mathbf{P}) \cup \gamma(\mathbf{Q})).
\]
\end{definition}

An efficient algorithm for this operation is given in
Aglorithm~\ref{lst:algo-tnum-mul-new}, which achieves the above
without actually computing the concrete sets represented by the input
tnums. The intuition behind the algorithm is discussed in
Section~\ref{sec:tnum-union-sound} along with formal proofs of
soundness and optimality.

\subsubsection{Time Complexity.}
Our \lstinline{tnum_mul} uses $O(n)$ word operations (where $n$ is the
bit width), similar to the previous work~\cite{hariTnum}. The addition
of \lstinline{tnum_union} in the new algorithm does not change the
overall complexity since it involves $O(1)$ word operations.

Technically, for arbitrary precision operands, each word operation
itself is of $O(n)$. Thus in terms of Turing Machine complexity, both
algorithms should be considered to be of $O(n^{2})$ complexity.

\subsubsection{Example.}
Figure~\ref{fig:new-mul-example} illustrates with an example how our
algorithm deals with any correlation in unknown bits. Here, the
multiplicand is $11$, and when a $\mu$ trit is encountered in the
multiplier, the previous algorithm contributes to the accumulator
$\mu\mu$ (with appropriate shift), while the new algorithm considers
both the possibilities (i.e., two possible partial products after
adding $00$ and $11$ respectively, with appropriate shift), resulting
in better precision.

\newcommand{\bitcolpad}{0em}

\newcommand{\bitselected}[1]{\textcolor{blue}{\textbf{#1}}}

\begin{figure}
  \caption{Illustration of the new algorithm handling correlation in
    unknown bits}
  \label{fig:new-mul-example}
  \begin{tcolorbox}[colback=white]
    Previous algorithm:

    \begin{center}
      \begin{tabular}{ p{1em} p{\bitcolpad} p{\bitcolpad} p{\bitcolpad} p{\bitcolpad} p{\bitcolpad} }
        \textbf{b} & & & 1 & 1 & $\times$ \\
        \textbf{a} & & & $\mu$ & 1 \\ \hline
        \textbf{T1} & & & 1 & 1 & \\
        \textbf{T2} & & $\mu$ & $\mu$ & \\ \hline
        & $\mu$ & $\mu$ & $\mu$ & 1
      \end{tabular}
    \end{center}

    \rule{\textwidth}{1pt}

    New algorithm:

    \begin{center}
      \begin{tabular}{ p{0.18\textwidth} p{1em} p{0.18\textwidth} p{1em} p{0.3\textwidth} }
        $acc_0$ & & $acc_1$ & & {Final $acc$} \\ \hline \\
        \begin{tabular}{ p{\bitcolpad} p{\bitcolpad} p{\bitcolpad} p{\bitcolpad} p{\bitcolpad} p{\bitcolpad} }
          & & 1 & 1 & $\times$ \\
          & & \bitselected{0} & 1 \\ \hline
          & & 1 & 1 & \\
          & 0 & 0 & \\ \hline
          0 & 0 & 1 & 1
        \end{tabular} &
        $\to$ &
        \begin{tabular}{ p{\bitcolpad} p{\bitcolpad} p{\bitcolpad} p{\bitcolpad} p{\bitcolpad} p{\bitcolpad} }
          & & 1 & 1 & $\times$ \\
          & & \bitselected{1} & 1 \\ \hline
          & & 1 & 1 & \\
          & 1 & 1 & \\ \hline
          1 & 0 & 0 & 1
        \end{tabular} &
        $\to$ &
        $\text{tnum\_union}(0011, 1001) = {\mu}0{\mu}1$
      \end{tabular}
    \end{center}

  \end{tcolorbox}
\end{figure}

The listing of our algorithm contains some redundant operations like
$acc \leftarrow acc$ to make the intuition clear. Removing such
operations, we implement the new algorithm as shown in
Listing~\ref{lst:tnum-mul-new}.

\begin{figure}
\begin{lstlisting}[caption={Our code implementing the new algorithm for tnum multiplication},label=lst:tnum-mul-new,
    language=C]
struct tnum tnum_mul(struct tnum a, struct tnum b)
{
	struct tnum acc = TNUM(0, 0);

	while (a.value || a.mask) {
		if (a.value & 1)
			acc = tnum_add(acc, b);
		else if (a.mask & 1)
			acc = tnum_union(acc, tnum_add(acc, b));

		a = tnum_rshift(a, 1);
		b = tnum_lshift(b, 1);
	}
	return acc;
}
\end{lstlisting}
\end{figure}

\section{Formalization in Rocq}

We give formal proofs for the soundness of the new tnum multiplication
algorithm in the \rocq{} theorem prover.  This includes the soundness
proofs for supporting operations like tnum addition and union.  While
the multiplication algorithm itself is not optimal, the optimality of
the supporting operations contribute to its overall precision. Hence
we show the optimality of the union operation newly added to the Linux
kernel, and present an optimality claim and proof for the existing
tnum addition algorithm that is arguably straightforward compared to
prior work~\cite{hariTnum}.

Our \rocq{} encoding of tnum arithmetic functions are manual.
Strictly speaking, this means our proofs are for the \rocq{}
implementations of the algorithms, and not the C code from the Linux
kernel.  That said, we try to make the Rocq embedding reflect the C
code to the extend possible (e.g.: we use \lstinline{nat} and modulo
to capture \lstinline{unsigned int} with wrap-around semantics).

This section uses snippets from our \rocq{} code, but mostly falls
back to a pen-and-paper style explanation for brevity. For the
details, one may refer to the actual code released
publicly~\cite{trirocq-repo}. It uses \rocq{} 9.0.0, and is
self-contained, depending only on the \rocq{} standard library.

Throughout this section, we use subsets of bit vectors as concrete
values. We define the type \lstinline{bvec} as a bounded list of bits,
where bits are represented using \lstinline{bit} (a variant defined to
be \lstinline{zero | one}). For accessing individual bits in a bit
vector of length $n$, we define and use the function
\lstinline{bvec_ith} that takes $i$, a $0$-indexed position, and a
proof that $i < n$. Other helpers like \lstinline{ith_v} and
\lstinline{ith_m} are self-explanatory.

Note that when we make references like \lstinline{tnum_mul},
\lstinline{bvec_mul}, etc., we could be meaning the function from the
Linux kernel, the underlying algorithm, or our encoding of the same in
\rocq{}. We believe that using separate notations could affect
readability and leave the meaning contextual.

Our proofs for tnum operations depend on custom-defined operations on
\lstinline{bvec}. For example, \lstinline{tnum_mul_sound}, the
soundness proof of tnum multiplication, depends on
\lstinline{bvec_mul}. To add trust, we prove the correctness of such
operations against \rocq{}-provided operations on \lstinline{nat} from
the standard library. With wrap-around enforced using modulo $2^{n}$,
all \lstinline{nat} operations that we use are semantically equivalent
to similar operations on $n$-bit unsigned integers in C (the Linux
kernel uses \lstinline{u64} to represent values and masks).

\subsection{Modeling Tnum}
\label{sec:model-tnum}

Recall from Section~\ref{sec:tnum-in-linux} the value-mask
representation of tnums used in the Linux kernel and the notion of
well-formedness associated with it. We model the struct
\lstinline{tnum} in \rocq{} as a variant parameterized with the word
length, whose only constructor takes in two bit vectors (of the said
word length), representing the value and the mask. The following
listing shows this definition along with the \lstinline{wellformed}
predicate. The helpers \lstinline{tnum.m} and \lstinline{tnum.v}
extract the mask word and the value word from a \lstinline{tnum}
object respectively.

\rocqlisting{
  \coqdocnoindent
  \coqdockw{Module} \coqdocvar{tnum}.\coqdoceol
  \coqdocindent{1.00em}
  \coqdockw{Variant} \coqdocvar{t} \coqdocvar{SIZE} := \coqdocvar{cons} (\coqdocvar{v} : \coqdocvar{bvec} \coqdocvar{SIZE}) (\coqdocvar{m} : \coqdocvar{bvec} \coqdocvar{SIZE}).\coqdoceol
  ...\coqdoceol
  \coqdocnoindent

  \coqdocindent{1.00em}
  \coqdockw{Definition} \coqdocvar{wellformed} \{\coqdocvar{SIZE}\} (\coqdocvar{tn} : \coqdocvar{t} \coqdocvar{SIZE}) :=\coqdoceol
  \coqdocindent{2.00em}
  \coqdockw{\ensuremath{\forall}} \coqdocvar{i} (\coqdocvar{hidx} : \coqdocvar{i} \texttt{<} \coqdocvar{SIZE}),\coqdoceol
  \coqdocindent{3.00em}
  \coqdocvar{bvec\_ith} (\coqdocvar{m} \coqdocvar{tn}) \coqdocvar{hidx} = \coqdocvar{one} \ensuremath{\rightarrow} \coqdocvar{bvec\_ith} (\coqdocvar{v} \coqdocvar{tn}) \coqdocvar{hidx} = \coqdocvar{zero}.\coqdoceol
  ...\coqdoceol

  \coqdocnoindent
  \coqdockw{End} \coqdocvar{tnum}.\coqdoceol
}

\begin{nonumremark}
  Our proofs of soundness and optimality of tnum operations assert the
  well-formedness of the operands as a precondition. This is omitted
  in explanations for brevity.
\end{nonumremark}

Now, we define the tnum membership as follows, which is needed in the
lemmas we prove.

\rocqlisting{\coqdockw{Definition} \coqdocvar{ingamma} \{\coqdocvar{SIZE}\} (\coqdocvar{x} : \coqdocvar{bvec} \coqdocvar{SIZE}) (\coqdocvar{T} : \coqdocvar{tnum.t} \coqdocvar{SIZE}) :=\coqdoceol
\coqdocindent{1.00em}
\coqdockw{\ensuremath{\forall}} \coqdocvar{i} (\coqdocvar{hidx} : \coqdocvar{i} \texttt{<} \coqdocvar{SIZE}),\coqdoceol
\coqdocindent{2.00em}
\coqdocvar{tnum.ith\_m} \coqdocvar{T} \coqdocvar{hidx} = \coqdocvar{zero} \ensuremath{\rightarrow} \coqdocvar{bvec\_ith} \coqdocvar{x} \coqdocvar{hidx} = \coqdocvar{tnum.ith\_v} \coqdocvar{T} \coqdocvar{hidx}.\coqdoceol
}

The \lstinline{ingamma} predicate states the property of every bit
vector $\mathbf{x} \in \gamma(\mathbf{T})$ that for every position
$i$, the $i$-th mask bit of $\mathbf{T}$ is \lstinline{zero} implies
$x_i$ and the $i$-th value bit of $\mathbf{T}$ are the same.

\subsection{Modeling Soundness}~\todo{include optimality?}

For our proofs, we use an alternative definition of soundness that
follows from Definition~\ref{def:sound}. This version is more
convenient in this context.

\begin{definition}[Alternative definition of soundness]
  \label{def:soundness-for-proofs}
  Given a function $f$ operating on bit vectors and a function $F$
  operating on tnums, $F$ is a sound approximation of $f$ if
  $f(\mathbf{p}, \mathbf{q}) \in \gamma(F(\mathbf{P}, \mathbf{Q}))$
  for all $\mathbf{p} \in \gamma(\mathbf{P})$ and $\mathbf{q} \in
  \gamma(\mathbf{Q})$.
\end{definition}

In \rocq, we define \lstinline{sound2}, a specialization of the above
for 2-ary functions as follows:

\rocqlisting{\coqdockw{Definition} \coqdocvar{sound2} \coqdocvar{SIZE}\coqdoceol
\coqdocindent{1.00em}
(\coqdocvar{f} : \coqdocvar{bvec} \coqdocvar{SIZE} \ensuremath{\rightarrow} \coqdocvar{bvec} \coqdocvar{SIZE} \ensuremath{\rightarrow} \coqdocvar{bvec} \coqdocvar{SIZE})\coqdoceol
\coqdocindent{1.00em}
(\coqdocvar{F} : \coqdocvar{tnum.t} \coqdocvar{SIZE} \ensuremath{\rightarrow} \coqdocvar{tnum.t} \coqdocvar{SIZE} \ensuremath{\rightarrow} \coqdocvar{tnum.t} \coqdocvar{SIZE}) :=\coqdoceol
\coqdocindent{1.00em}
\coqdockw{\ensuremath{\forall}} (\coqdocvar{p} \coqdocvar{q} : \coqdocvar{bvec} \coqdocvar{SIZE}) \coqdocvar{P} \coqdocvar{Q},\coqdoceol
\coqdocindent{2.00em}
\coqdocvar{tnum.wellformed} \coqdocvar{P} \ensuremath{\rightarrow} \coqdocvar{tnum.wellformed} \coqdocvar{Q} \ensuremath{\rightarrow}\coqdoceol
\coqdocindent{2.00em}
\coqdocvar{ingamma} \coqdocvar{p} \coqdocvar{P} \ensuremath{\rightarrow} \coqdocvar{ingamma} \coqdocvar{q} \coqdocvar{Q} \ensuremath{\rightarrow}\coqdoceol
\coqdocindent{2.00em}
\coqdocvar{tnum.wellformed} (\coqdocvar{F} \coqdocvar{P} \coqdocvar{Q}) \ensuremath{\land} \coqdocvar{ingamma} (\coqdocvar{f} \coqdocvar{p} \coqdocvar{q}) (\coqdocvar{F} \coqdocvar{P} \coqdocvar{Q}).\coqdoceol
}

\subsection{Soundness of the New Tnum Multiplication}

Following Definition~\ref{def:soundness-for-proofs}, we need to show
that \lstinline{tnum_mul(P, Q)} contains all $\mathbf{x} \times
\mathbf{y}$ such that $\mathbf{x} \in \gamma(\mathbf{P})$ and
$\mathbf{y} \in \gamma(\mathbf{Q})$. However, proving this directly
against the multiplication of \lstinline{nat} in \rocq{} is infeasible
(because the definition of \lstinline{nat} multiplication has no
resemblance to that of \lstinline{tnum_mul}). Thus, we define a new
function called \lstinline{bvec_mul} for the multiplication of bit
vectors based on the shift-and-add approach. The soundness proof of
\lstinline{tnum_mul} now splits into the following:

\begin{enumerate}
\item The correctness of \lstinline{bvec_mul} against the
  \rocq{}-provided \lstinline{nat} multiplication
\item The soundness of \lstinline{tnum_mul} against \lstinline{bvec_mul}
\end{enumerate}

We show these by proving the following lemmas in \rocq{}:

\rocqlisting{\coqdockw{Lemma} \coqdocvar{bvec\_mul\_correct} : \coqdockw{\ensuremath{\forall}} \coqdocvar{n} (\coqdocvar{a} \coqdocvar{b} : \coqdocvar{bvec} (\coqdocvar{S} \coqdocvar{n})),\coqdoceol
\coqdocindent{2.00em}
\coqdocvar{bvec\_denote} (\coqdocvar{bvec\_mul} \coqdocvar{a} \coqdocvar{b}) =\coqdoceol
\coqdocindent{3.00em}
\coqdocvar{Nat.modulo} (\coqdocvar{bvec\_denote} \coqdocvar{a} \ensuremath{\times} \coqdocvar{bvec\_denote} \coqdocvar{b}) (\coqdocvar{Nat.pow} 2 (\coqdocvar{S} \coqdocvar{n})).\coqdoceol
}

(where \lstinline{bvec_denote} gives the value of a bit vector as
\lstinline{nat})

\rocqlisting{\coqdockw{Lemma} \coqdocvar{tnum\_mul\_sound} (\coqdocvar{n} : \coqdocvar{nat}) : \coqdocvar{sound2} (\coqdocvar{S} \coqdocvar{n}) \coqdocvar{bvec\_mul} \coqdocvar{tnum\_mul}.\coqdoceol
}

As a prerequisite, we also prove the correctness of binary and tnum
shifts that we define and use in \lstinline{bvec_mul} and
\lstinline{tnum_mul}. We omit such details for brevity.

\subsubsection{Proof Overview.}
The while loop in \lstinline{tnum_mul} is encoded as a recursive
function \lstinline{tnum_mul_loop(A, B, C)}. The additional argument
$C$ is the partial accumulator. $A$ is of length $m$ and both $B$ and
$C$ are of length $n$ (the actual bit width that \lstinline{tnum_mul}
operates on). $m$ decreases by one with each recursive call (i.e., the
while loop iteration where $A$ is right-shifted). Given
$\ingamma{a}{A}$, $\ingamma{b}{B}$, and $\ingamma{c}{C}$, the goal is
to show \lstinline{bvec_mul_loop(a, b, c)} $\in$
\lstinline{tnum_mul_loop(A, B, C)}.

The proof progresses by induction on $m$, the length of $A$. The base
case is trivial, and for the induction step, we show that
$\ingamma{a'}{A'}$, $\ingamma{b'}{B'}$, and $\ingamma{c'}{C'}$ (where
the primed variables are the updated multiplier, multiplicand, and
partial accumulator calculated for the recursive call). This enables
us to apply the induction hypothesis and discharge the main
goal. $\ingamma{a'}{A'}$ and $\ingamma{b'}{B'}$ are due to the
soundness of tnum shifts. $\ingamma{c'}{C'}$ is tricky because
calculating the updated partial accumulator involves branching on
$A.m[0]$ and a union operation in \lstinline{tnum_mul_loop}, which are
not present in \lstinline{bvec_mul_loop}. However, the proof is
straightforward thanks to the soundness of \lstinline{tnum_add} and
\lstinline{tnum_union} proved separately.

For a more detailed explanation, see Appendix~\ref{apx:proof-details}.

\subsection{Soundness and Optimality of Tnum Union}
\label{sec:tnum-union-sound}

The function \lstinline{tnum_mul} uses \lstinline{tnum_union} as a
subroutine. We define \lstinline{tnum_union} following
Aglorithm~\ref{lst:algo-tnum-mul-new} which achieves the union
operation defined in Definition~\ref{def:tnum-union} without actually
computing the concrete sets represented by the input tnums. The
algorithm employs only a couple of bitwise operations, based on the
idea that each bit in the result can be computed depending only on the
input bits in the same position:

\begin{itemize}
\item When the $i$-th trits in both operands are the same certain
  value, the result also has the same certain value.
\item When either operand has uncertainty at the $i$-th position, or
  when the operands have certain but disagreeing values at the $i$-th
  position, the result has uncertainty at the $i$-th position.
\end{itemize}

While simple, the above reasoning or the exact use of bitwise
operations could be easily wrong. Hence we show the soundness of
\lstinline{tnum_union} by proving the following lemma, which states
that both $\mathbf{P}$ and $\mathbf{Q}$ are subsets of $\mathbf{P}
\cup \mathbf{Q}$:

\rocqlisting{\coqdockw{Lemma} \coqdocvar{tnum\_union\_sound} \{\coqdocvar{SIZE}\} (\coqdocvar{P} \coqdocvar{Q} : \coqdocvar{tnum.t} \coqdocvar{SIZE}) :\coqdoceol
\coqdocindent{2.00em}
\coqdocvar{tnum.wellformed} \coqdocvar{P} \ensuremath{\rightarrow} \coqdocvar{tnum.wellformed} \coqdocvar{Q} \ensuremath{\rightarrow}\coqdoceol
\coqdocindent{2.00em}
\coqdockw{let} \coqdocvar{U} := \coqdocvar{tnum\_union} \coqdocvar{P} \coqdocvar{Q} \coqdoctac{in}\coqdoceol
\coqdocindent{2.00em}
\coqdocvar{tnum.wellformed} \coqdocvar{U} \ensuremath{\land} \coqdocvar{subset} \coqdocvar{P} \coqdocvar{U} \ensuremath{\land} \coqdocvar{subset} \coqdocvar{Q} \coqdocvar{U}.\coqdoceol
}

This is how we define the subset relation of tnums:

\rocqlisting{\coqdockw{Definition} \coqdocvar{subset} \{\coqdocvar{SIZE}\} (\coqdocvar{P} \coqdocvar{Q} : \coqdocvar{tnum.t} \coqdocvar{SIZE}) :=\coqdoceol
\coqdocindent{1.00em}
\coqdockw{\ensuremath{\forall}} \coqdocvar{x}, \coqdocvar{ingamma} \coqdocvar{x} \coqdocvar{P} \ensuremath{\rightarrow} \coqdocvar{ingamma} \coqdocvar{x} \coqdocvar{Q}.\coqdoceol
}

The proof progresses by a case analysis on the $i$-th value/mask bits
of the operands after fixing a bit position $i$. It is fairly
straightforward because tnum union does not involve carries or loops.

\subsubsection{Optimality of \lstinline{tnum_union}.}
Our approach to proving the optimality of \lstinline{tnum_union} is to
show that any uncertainty in \lstinline{tnum_union(P, Q)} for any
$\mathbf{P}$, $\mathbf{Q}$ is the minimum required to represent their
concrete union $\gamma(\mathbf{P}) \cup \gamma(\mathbf{Q})$ as a
tnum. More specifically, if \lstinline{tnum_union(P, Q)} has
uncertainty at $i$-th bit for some position $i$, the concrete union
contains two elements such that their $i$-th bits have distinct
values. We define the concrete union and the optimality lemma in
\rocq{} as follows:

\rocqlisting{\coqdockw{Definition} \coqdocvar{concrete\_union\_element} \{\coqdocvar{SIZE}\} (\coqdocvar{P} \coqdocvar{Q} : \coqdocvar{tnum.t} \coqdocvar{SIZE}) :=\coqdoceol
\coqdocindent{2.00em}
\{ \coqdocvar{x} \ensuremath{|} \coqdocvar{ingamma} \coqdocvar{x} \coqdocvar{P} \ensuremath{\lor} \coqdocvar{ingamma} \coqdocvar{x} \coqdocvar{Q} \}.\coqdoceol
}

\rocqlisting{\coqdockw{Lemma} \coqdocvar{tnum\_union\_optimal} \{\coqdocvar{SIZE}\} (\coqdocvar{P} \coqdocvar{Q} : \coqdocvar{tnum.t} \coqdocvar{SIZE}) :\coqdoceol
\coqdocindent{2.00em}
\coqdocvar{tnum.wellformed} \coqdocvar{P} \ensuremath{\rightarrow} \coqdocvar{tnum.wellformed} \coqdocvar{Q} \ensuremath{\rightarrow}\coqdoceol
\coqdocindent{2.00em}
\coqdockw{\ensuremath{\forall}} \{\coqdocvar{i}\} (\coqdocvar{hidx} : \coqdocvar{i} \texttt{<} \coqdocvar{SIZE}),\coqdoceol
\coqdocindent{3.00em}
\coqdocvar{tnum.ith\_m} (\coqdocvar{tnum\_union} \coqdocvar{P} \coqdocvar{Q}) \coqdocvar{hidx} = \coqdocvar{one} \ensuremath{\rightarrow}\coqdoceol
\coqdocindent{3.00em}
\coqdoctac{\ensuremath{\exists}} (\coqdocvar{x} \coqdocvar{y} : \coqdocvar{concrete\_union\_element} \coqdocvar{P} \coqdocvar{Q}),\coqdoceol
\coqdocindent{4.00em}
\coqdocvar{bvec\_ith} (\coqdocvar{proj1\_sig} \coqdocvar{x}) \coqdocvar{hidx} \ensuremath{\not=} \coqdocvar{bvec\_ith} (\coqdocvar{proj1\_sig} \coqdocvar{y}) \coqdocvar{hidx}.\coqdoceol
}

The proof progresses by a case analysis on the $i$-th mask bits of the
operands after fixing a bit position $i$. The following table lists
the cases and choices of $x$ and $y$ that discharge the goal:

\begin{center}
  \begin{tabular}{ccll}
    \toprule
    $P.[i]$ & $Q.m[i]$ & $x$ & $y$ \\
    \midrule
    0 & 0 & $P.v$ & $Q.v$ \\
    0 & 1 & $Q.v$ & $Q.v \mathbin{|} Q.m$ \\
    1 & 0 & $P.v$ & $P.v \mathbin{|} P.m$ \\
    1 & 1 & $P.v$ & $P.v \mathbin{|} P.m$
  \end{tabular}
\end{center}

\subsection{Soundness and Optimality of Tnum Addition}
\label{sec:tnum-add-rocq}

\todo{cross-check: how it differs from previous work.}

The pen-and-paper proofs from the previous work~\cite{hariTnum} for
the soundness and optimality of \lstinline{tnum_add} focus on
``identifying which carry-in bit positions vary across different
concrete additions.'' In essence, they show the equivalence of the
formula for concrete additions and \lstinline{tnum_add}, concluding
\lstinline{tnum_add} to be both sound and optimal. Our proofs in
\rocq{}, on the other hand, show the soundness and optimality of
\lstinline{tnum_add} separately by proving direct
lemmas. \lstinline{tnum_add} is encoded in \rocq{} as a direct
translation of Listing~\ref{lst:tnum-add} using bit operations and
binary addition defined on \lstinline{bvec_add}.

\subsubsection{Soundness of \lstinline{tnum_add}.}

Following Definition~\ref{def:soundness-for-proofs}, we show the
soundness as well as the well-formedness of \lstinline{tnum_add} by
proving the lemma given below:

\rocqlisting{\coqdockw{Lemma} \coqdocvar{tnum\_add\_sound} \{\coqdocvar{SIZE}\} : \coqdocvar{sound2} \coqdocvar{SIZE} \coqdocvar{bvec\_add} \coqdocvar{tnum\_add}.\coqdoceol
}

The proof is essentially a deep case analysis. Our key strategy was to
fill the context with several constraints implicit in tnum addition,
helping \rocq{} eliminate absurd cases during
\lstinline{destruct}.~\todo{cross-check} These constraints were some
observations like the following, which we stated and proved as
separate lemmas:

\rocqlisting{\coqdockw{Lemma} \coqdocvar{hlp\_xy\_incarry\_eq\_minsum\_incarry\_internal} \{\coqdocvar{SIZE}\} \coqdocvar{x} \coqdocvar{y} \coqdocvar{P} \coqdocvar{Q} :\coqdoceol
\coqdocindent{2.00em}
\coqdocvar{tnum.wellformed} \coqdocvar{P} \ensuremath{\rightarrow} \coqdocvar{tnum.wellformed} \coqdocvar{Q} \ensuremath{\rightarrow}\coqdoceol
\coqdocindent{2.00em}
\coqdocvar{ingamma} \coqdocvar{x} \coqdocvar{P} \ensuremath{\rightarrow} \coqdocvar{ingamma} \coqdocvar{y} \coqdocvar{Q} \ensuremath{\rightarrow}\coqdoceol
\coqdocindent{2.00em}
\coqdockw{\ensuremath{\forall}} [\coqdocvar{i}] (\coqdocvar{hidx} : \coqdocvar{i} \texttt{<} \coqdocvar{SIZE}),\coqdoceol
\coqdocindent{3.00em}
\coqdocvar{ith\_mask\_incarry} \coqdocvar{P} \coqdocvar{Q} \coqdocvar{hidx} = \coqdocvar{zero} \ensuremath{\rightarrow}\coqdoceol
\coqdocindent{3.00em}
\coqdocvar{ith\_value\_mask\_incarry} \coqdocvar{P} \coqdocvar{Q} \coqdocvar{hidx} = \coqdocvar{zero} \ensuremath{\rightarrow}\coqdoceol
\coqdocindent{3.00em}
\coqdocvar{bvec\_incarry} \coqdocvar{x} \coqdocvar{y} \coqdocvar{hidx} = \coqdocvar{ith\_value\_incarry} \coqdocvar{P} \coqdocvar{Q} \coqdocvar{hidx}.\coqdoceol
}

The above lemma states that if the incoming carry bit at position $i$
is $0$ for both $P.m + Q.m$ and $(P.v + P.v) + (Q.m + Q.v)$, then the
$i$-th bit in the concrete sum is the same as the $i$-th value bit in
the abstract sum. These expressions appear in the definition of
\lstinline{tnum_add} as \lstinline{sm} and \lstinline{sv + sm}
respectively (see Listing~\ref{lst:tnum-add}).

\subsubsection{Optimality of \lstinline{tnum_add}.}

Like in the case of the optimality of \lstinline{tnum_union}, we prove
the optimality of \lstinline{tnum_add} by showing that if there is
uncertainty at the $i$-th bit of \lstinline{tnum_add(P, Q)} for some
$\mathbf{P}$, $\mathbf{Q}$, $i$, then there exist a pair of concrete
sums that have distinct bits at position $i$:

\rocqlisting{\coqdockw{Lemma} \coqdocvar{tnum\_add\_optimal} : \coqdockw{\ensuremath{\forall}} [\coqdocvar{SIZE}] \coqdocvar{P} \coqdocvar{Q} \coqdocvar{i} (\coqdocvar{hidx} : \coqdocvar{i} \texttt{<} \coqdocvar{SIZE}),\coqdoceol
\coqdocindent{3.00em}
\coqdocvar{tnum.wellformed} \coqdocvar{P} \ensuremath{\rightarrow} \coqdocvar{tnum.wellformed} \coqdocvar{Q} \ensuremath{\rightarrow}\coqdoceol
\coqdocindent{3.00em}
\coqdocvar{tnum.ith\_m} (\coqdocvar{tnum\_add} \coqdocvar{P} \coqdocvar{Q}) \coqdocvar{hidx} = \coqdocvar{one} \ensuremath{\rightarrow}\coqdoceol
\coqdocindent{3.00em}
\coqdoctac{\ensuremath{\exists}} \coqdocvar{p} \coqdocvar{q} \coqdocvar{p'} \coqdocvar{q'},\coqdoceol
\coqdocindent{4.00em}
\coqdocvar{ingamma} \coqdocvar{p} \coqdocvar{P} \ensuremath{\land} \coqdocvar{ingamma} \coqdocvar{q} \coqdocvar{Q} \ensuremath{\land}\coqdoceol
\coqdocindent{5.00em}
\coqdocvar{ingamma} \coqdocvar{p'} \coqdocvar{P} \ensuremath{\land} \coqdocvar{ingamma} \coqdocvar{q'} \coqdocvar{Q} \ensuremath{\land}\coqdoceol
\coqdocindent{5.00em}
\coqdocvar{bvec\_ith} (\coqdocvar{bvec\_add} \coqdocvar{p} \coqdocvar{q}) \coqdocvar{hidx} \ensuremath{\not=} \coqdocvar{bvec\_ith} (\coqdocvar{bvec\_add} \coqdocvar{p'} \coqdocvar{q'}) \coqdocvar{hidx}.\coqdoceol
}

(It is implicit that either of $p \ne p'$ or $q \ne q'$ should hold,
but both need not.)

There are exactly three cases by which the $i$-th bit of
\lstinline{tnum_add(P, Q)} can be $1$. Either $P.m[i] = 1$, $Q.m[i] =
1$, or $\chi.m[i] = 1$ (see Listing~\ref{lst:tnum-add}). We consider
these three cases individually to prove the above lemma. The first two
cases are straightforward. When $\chi.m[i] = 1$, we show that the
following assignments would satisfy our goal:

\begin{align*}
  p &= P.v \\
  q &= Q.v \\
  p' &= P.v \mathbin{|} P.m \\
  q' &= Q.v \mathbin{|} Q.m
\end{align*}

One key insight used in the proof is that the addition of the value
and mask fields of a well-formed tnum is equivalent to their bitwise
OR. The details can be found in Appendix~\ref{apx:proof-details}.

\todo{include tnum\_union\_optimal\_by\_subset?}

\section{Experimental Evaluation}

We evaluate the \emph{optimality}, \emph{precision}, and
\emph{performance} of our \lstinline{tnum_mul} (upstreamed in Linux
6.18) by experimental means and compare the results with that of the
previous one~\cite{hariTnum}. We also investigate the use of
multiplication in real-world \ebpf{} programs. Throughout this
section, we refer to our \lstinline{tnum_mul} as
\lstinline{our_tnum_mul} and the previous one as
\lstinline{prv_tnum_mul}.

\subsection{Optimality}

To estimate the overall optimality of \lstinline{tnum_mul} at a given
bit width, we pick all possible tnum pairs $(\mathbf{P}, \mathbf{Q})$
and count the instances where \lstinline{tnum_mul(P, Q)} is the same
as $\alpha(\gamma(\mathbf{P}) \times \gamma(\mathbf{Q}))$ (recall
Definition~\ref{def:optimal}). Table~\ref{tab:optimality} shows that
\lstinline{our_tnum_mul} performed better at all bit widths tested. We
could not test beyond eight bits as it was computationally infeasible
(at bit width $n$, we need to find the entire concrete product set of
$3^n \times 3^n$ tnum pairs and then abstract it).

\begin{table}
  \begin{center}
    \begin{tabular}{rrr}
      \toprule
      Bit Width & New Is Optimal & Prv. Is Optimal \\
      \midrule
      2 & 97.53\% & 96.30\% \\
4 & 90.63\% & 68.62\% \\
8 & 76.08\% & 21.82\% \\
\bottomrule

    \end{tabular}
  \end{center}

  \caption{Percentage of instances where \lstinline{our_tnum_mul} and
    \lstinline{prv_tnum_mul} yielded optimal results.}
  \label{tab:optimality}
\end{table}

As seen in Table~\ref{tab:optimality}, both \lstinline{our_tnum_mul}
and \lstinline{prv_tnum_mul} become less optimal as the bit width
increases. However, the rate of \lstinline{our_tnum_mul} becoming less
optimal is much less than that of \lstinline{prv_tnum_mul}.

\subsection{Relative Precision}

Finding the relative precision of two tnum multiplication algorithms
is easier than checking if a particular algorithm yields truly optimal
results. Hence, at higher bit widths, this serves as a practical
measure for how optimal a version of \lstinline{tnum_mul} is compared
to another.

Given two sound tnum multiplication functions $f_1$ and $f_2$, we say
$f_1$ yields a more precise product than $f_2$ for operands
$\mathbf{P}, \mathbf{Q}$ if $f_1(\mathbf{P}, \mathbf{Q}) \tle
f_2(\mathbf{P}, \mathbf{Q})$. Checking this does not require finding
the optimal product and can be quickly done with the help of the
function \lstinline{tnum_in} available in the Linux kernel (which is
implemented using a few bitwise
operations)~\cite{linux-tnum-h, linux-tnum-c}.

To find the overall precision of \lstinline{our_tnum_mul} over a
sample set, we check for how many pairs of operands it yielded a
precision better than, similar to, or worse than
\lstinline{prv_tnum_mul}. Incomparable cases are in which the result
of one multiplication algorithm does not represent a subset of the
other.

We checked the precision of both \lstinline{our_tnum_mul} and
\lstinline{prv_tnum_mul} for all possible tnum pairs up to 8 bits, and
the results are given in Table~\ref{tab:precision-exhaust}.

\begin{table}
  \begin{center}
    \begin{tabular}{rrrrr}
      \toprule
      Bit Width & New Is Better & Both the Same & New Is Worse & Incomparable \\
      \midrule
      2 & 1.23\% & 98.77\% & 0.00\% & 0.00\% \\
4 & 25.54\% & 72.76\% & 1.60\% & 0.09\% \\
8 & 69.84\% & 26.45\% & 3.30\% & 0.41\% \\
\bottomrule

    \end{tabular}
  \end{center}

  \caption{Percentage of instances where \lstinline{our_tnum_mul} had
    a precision better than, same as, or worse than
    \lstinline{prv_tnum_mul} at lower bit widths.}
  \label{tab:precision-exhaust}
\end{table}

At higher bits, due to the infeasibility of considering all tnum
pairs, we used 10 million random pairs
(results given in Table~\ref{tab:precision-rand}). The exact number of incomparable
cases are listed in the table because the percentage becomes zero
after rounding.

\begin{table}
  \begin{center}
    \begin{tabular}{rrrrrr}
      \toprule
      & & & & \multicolumn{2}{c}{Incomparable} \\
      \cmidrule(r){5-6}
      Bit Width & New Is Better & Both the Same & New Is Worse & \% & Count / 10M \\
      \midrule
      16 & 93.55\% & 6.23\% & 0.18\% & 0.04\% & 4377 \\
24 & 95.41\% & 4.37\% & 0.17\% & 0.04\% & 4389 \\
31 & 95.59\% & 4.20\% & 0.17\% & 0.04\% & 4479 \\
32 & 43.41\% & 56.38\% & 0.19\% & 0.02\% & 2073 \\
48 & 0.54\% & 99.42\% & 0.04\% & 0.00\% & 235 \\
64 & 0.54\% & 99.42\% & 0.04\% & 0.00\% & 235 \\
\bottomrule

    \end{tabular}
  \end{center}

  \caption{Percentage of instances where \lstinline{our_tnum_mul} had
    a precision better than, same as, or worse than
    \lstinline{prv_tnum_mul} at higher bit widths.}
  \label{tab:precision-rand}
\end{table}

Data shows that \lstinline{our_tnum_mul} yields more precise results
compared to the previous one for all bit widths tested.  Our results
are in alignment with the independent benchmarks performed by others
as part of the review process of our kernel
patch~\cite{eddy-benchmark, hari-benchmark}.

\ifdefined\doplot
To get a quick idea of the trend in the improvement of precision, data
from Table~\ref{tab:precision-exhaust} and
Table~\ref{tab:precision-rand} is plotted in
Figure~\ref{fig:precision}. Incomparable cases are omitted in the plot
because the number of such instances are negligible ($0.41\%$ at 8
bits and $<0.1\%$ in all other bit widths tested).

\begin{figure}
  \begin{center}
\begin{tikzpicture}
  \begin{axis}[
      legend pos=outer north east,
      xmode=log,
      log basis x={2},
      xlabel=Bit Width,
      ylabel=Percentage,
    ]
    \pgfplotsset{cycle list shift=10}
    \addplot gnuplot [
      raw gnuplot,
      table/x index=0,
      table/y index=1,
    ] {plot 'precision-eval/tab-gen-by-mkgnuplot.txt' index 0};
    \addplot gnuplot [raw gnuplot] {plot 'precision-eval/tab-gen-by-mkgnuplot.txt' index 0};
    \pgfplotsset{cycle list shift=6}
    \addplot gnuplot [raw gnuplot] {plot 'precision-eval/tab-gen-by-mkgnuplot.txt' index 1};
    \pgfplotsset{cycle list shift=9}
    \addplot gnuplot [raw gnuplot] {plot 'precision-eval/tab-gen-by-mkgnuplot.txt' index 2};
    \legend{new is more precise, both the same, new is less precise}
  \end{axis}
\end{tikzpicture}
\caption{Percentage of instances where the new algorithm had a
  precision better than, same as, or worse than the previous one at
  different bit widths.}
\label{fig:precision}
\end{center}
\end{figure}
\fi

\subsubsection{Precision Anomaly Beyond 31 Bits.}

\label{sec:precision-anomaly}

As seen from Table~\ref{tab:precision-exhaust} and
Table~\ref{tab:precision-rand},
\lstinline{our_tnum_mul} progressively does better than
\lstinline{prv_tnum_mul} until the 31-bit mark.  The seemingly
unexpected drop in advantage beyond this point is an artifact of using
64-bit integers in the implementations. Beyond 31 bits, the result of
a multiplication can overflow 64 bits. A similar anomaly can be
observed at the 16-bit mark when using 32-bit integers to implement
tnum algorithm.

\subsection{Precision at Higher Bits}

The precision anomaly mentioned above lead us to the following
hypothesis.

\begin{nonumhypo}
  The precision advantage that our algorithm has over the previous one
  is towards the most significant bits.
\end{nonumhypo}

Precision at higher bits is important because their positional values
are higher. For example, an array index being $\mu0000001$ (imprecise
at a high bit) is more likely to result in a false positive (a good
program being rejected) during bounds checking than $000000\mu1$.

To verify our hypothesis, we compared the results of
\lstinline{our_tnum_mul} and \lstinline{prv_tnum_mul} for all tnum
pairs up to 8 bits and for 10 million random tnum pairs at higher
bit widths. The exact check we did is to see which version of
\lstinline{tnum_mul} returned a mask that is numerically smaller (this
would mean the $\mu$ trits are at lower bits, or in other words, the
result is more precise at higher bits). The data is summarized in
Table~\ref{tab:precision-higherbits}, which confirms our hypothesis.

\begin{table}
  \begin{center}
    \begin{tabular}{rrr}
      \toprule
      Bit Width & Out of All Cases & Out of Incomparable Cases \\
      \midrule
      2 & 1.23\% & -nan\%
\\
4 & 25.54\% & 0.00\%
\\
8 & 70.07\% & 55.92\%
\\
16 & 93.59\% & 97.08\%
\\
24 & 95.45\% & 97.74\%
\\
31 & 95.63\% & 97.48\%
\\
32 & 43.43\% & 88.28\%
\\
48 & 0.54\% & 1.28\%
\\
64 & 0.54\% & 1.28\%
\\
\bottomrule

    \end{tabular}
  \end{center}

  \caption{Percentage of instances where \lstinline{our_tnum_mul}
    yielded more precision at higher bits compared to
    \lstinline{prv_tnum_mul}, at different bit widths.}
  \label{tab:precision-higherbits}
\end{table}

\subsection{Multiplication in Real-world eBPF Programs}

\label{sec:mul-usage-stats}

To quantify the impact of improvements to tnum multiplication, we
analyzed the disassembly of 251 eBPF object files from the sample
set~\cite{vbpf-samples} used in PREVAIL~\cite{prevail}. The set
includes programs from major projects like Cilium~\cite{cilium}. There
were 8854 instances of \lstinline{*=}, implying real-world eBPF
programs use multiplication significantly. Out of these, 2167 were
64-bit multiplication and 6687 were 32-bit. This is noteworthy because
at 32 bits, our algorithm performs better than the previous one in
43.41\% of cases even with the precision anomaly mentioned in
Section~\ref{sec:precision-anomaly}.

\subsection{Performance}

Since our algorithm improves the precision significantly, it makes
sense to ask if this improvement comes with a performance overhead.
While both algorithms seem to have the same complexity and termination
conditions, we take measurements to get the real-world performance.

We ran both versions of \lstinline{tnum_mul} for 10 million times in a
single thread and took the measurement using
\lstinline{clock_gettime(CLOCK_THREAD_CPUTIME_ID)}. We did this in the
userspace, which does not make any difference in the case of
\lstinline{tnum_mul}. We compiled the program for x86\_64 using
\lstinline{gcc -O2} (the same optimization level the Linux kernel is
compiled at~\cite{linux-makefile}) and ran it on Intel Core i5-6300U
CPU.

For 10 million random pairs, on average, the previous algorithm
finished in 4.52 seconds while our algorithm took 5.19 seconds. While
not considerable, this shows our algorithm has a performance overhead.

We used \emph{callgrind}~\cite{callgrind} to get the cycle estimation,
which showed \lstinline{our_tnum_mul} took 1352 cycles per call while
\lstinline{prv_tnum_mul} took 875 cycles. While we expect our use of
\lstinline{tnum_union} to cause some overhead, we couldn't get the
cycle estimation for the same since the compiler inlined it inside
\lstinline{tnum_mul} (observed using \emph{objdump}).

\section{Related Work}

The research community has shown deep interest in \ebpf{}
verification.  Previously, bugs have been identified in the range
analysis part of the Linux kernel \ebpf{} verifier, and works exist
that apply formal methods to rectify such
issues~\cite{shachnai-fixing-ebpf-abs-ops, agni, prevail}. Another
component of the eBPF ecosystem that benefited from formal methods is
its JIT compiler in the Linux kernel~\cite{osdiJit}.

The first attempt at formalizing the Linux tnum arithmetic is by
\citeauthor{hariTnum}~\cite{hariTnum}, where the authors provide
bounded-bitwidth machine-checked proofs and unbounded-bitwidth
pen-and-paper proofs for the soundness and optimality of
\lstinline{tnum_add} and \lstinline{tnum_sub}. They also provide an
improved algorithm for multiplication with soundness proof. Their
implementation was accepted to the Linux kernel and remained there
until our patch~\cite{nan-libbpf-patch} replaced it in Linux 6.18.

\todo{some rephrasing needed here in the commented part.}

The \emph{KnownBits} domain used in LLVM~\cite{llvm-kb-cpp} is related
to tnum from Linux in spirit, but instead of mask and value, it has
two fields \lstinline{Zero} and \lstinline{One}, keeping track of
positions known to be $0$ and $1$ respectively~\cite{llvm-kb-h}. This
is similar to the the \emph{bitfield domain} $\mathcal{D}^{\sharp}_b$
presented by Min{\'e}~\cite{mine-bitfield}.

\section{Conclusion}

Tnum is an important abstract domain that is used as part of the Linux
kernel \ebpf{} verifier. Closely related variants can be found in
other software as well. This paper presents a novel algorithm
(upstreamed to the Linux kernel) for tnum multiplication that is more
precise and equally efficient compared to the previous one. We prove
the soundness of the same in \rocq{}, along with soundness and
optimality proofs for tnum addition and union. Ours is the first work
to provide unbounded machine-checked proofs for tnum arithmetic.

\begin{credits}
  \subsubsection{\ackname}
  \ifdefined\doubleblind We thank the Linux kernel community for their
  effort in reviewing and merging our patch implementing the new tnum
  multiplication algorithm. (Details hidden for double-blind review.)
  \else We thank Eduard Zingerman and Harishankar Vishwanathan for
  reviewing, testing, and independently benchmarking our Linux kernel
  patch containing the new \lstinline{tnum_mul} code. We also thank
  Alexei Starovoitov, Daniel Borkmann, and Andrii Nakryiko for their
  valuable comments and participation in the process involved in
  getting the patch merged.
  \fi
\end{credits}

\printbibliography

\appendix
\section{New Algorithm and the \ebpf{} Verifier}
\label{apx:test-program}

Following is an \ebpf{} Assembly program included in the Linux kernel
test suite as part of our patch. It demonstrates how the precision of
the new tnum multiplication algorithm helps the verifier avoid a false
positive (i.e., a good program being rejected).

\begin{lstlisting}
	call %[bpf_get_prandom_u32];
	r0 &= 0x2;
	r0 |= 0x1;
	r0 *= 0x3;
	r0 &= 0x4;
	if r0 != 0 goto l0_%=;
	r0 = 0;
	goto l1_%=;
l0_%=:
	r0 = 1;
l1_%=:
\end{lstlisting}

The \ebpf{} runtime requires that every \ebpf{} program terminates
with the register \lstinline{r0} set to \lstinline{0}. In the above
program, the branch \lstinline{l0} breaks this rule. However, this
branch is never taken, and whether the program is accepted or not
depends on the verifier's ability to figure this out.

In the first three lines, we force the verifier to be uncertain about
the value in \lstinline{r0} by doing an operation involving a random
value. After analyzing these, the verifier must be representing
\lstinline{r0} as $00\mu1$. The following operation \lstinline{r0 *= 0x3}
puts the verifier in the same situation illustrated in
Figure~\ref{fig:new-mul-example}, resulting in \lstinline{r0} being
$\mu\mu\mu1$ when the previous algorithm is used, and \lstinline{r0}
being $\mu0\mu1$ if our algorithm is used. The added precision in the
new algorithm helps the verifier realize that the illegal branch is
never taken, since the program takes this branch based on the result
of the multiplication.

\section{Detailed Overview of the Proofs}
\label{apx:proof-details}

\subsection{Convention}

This document uses upper case letters to denote tnums that have a
value-mask representation internally.  $\wfdomn$ is used to represent
the set of all possible well-formed (see
Definition~\ref{def:wellformed}) tnums in $n$-bit value-mask
representation.  For a tnum $P$, $\ithv{P}$ represents the $i$-th
value bit of $P$, and $\ithm{P}$ represents the $i$-th mask bit of $P$
(like in the Rocq proofs, this document uses lower numbers to
represent least significant bits, but 1-indexing is used instead of
0-indexing). For a bit vector $x$, $\seti{x}$ denotes the same bit
vector with its $i$-ith bit set to $1$.

\subsection{Background}

In this section, $i$ denotes a valid index (bit position) for the bit
vector it is used with. Details like bounds are omitted for brevity.

\begin{definition}
  \label{def:wellformed} (Well-formedness)
  $$\forall P \in \wfdomn, \ithm{P} = 1 \implies \ithv{P} = 0$$
\end{definition}

\begin{definition}
  \label{def:ingamma} (Membership)
  $$\forall p \in \{0, 1\}^{n}, P \in \wfdomn : \ingamma{p}{P} \iff (\ithm{P} = 0 \implies \ith{p} = \ithv{P})$$
\end{definition}

\begin{lemma}
  \label{lem:ingamma_value}
  $\forall P \in \wfdomn, \ingamma{\tv{P}}{P}$
\end{lemma}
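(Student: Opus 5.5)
This follows directly from Definition~\ref{def:ingamma}. Since membership is a per-position condition, the plan is to fix an arbitrary $P \in \wfdomn$ and an arbitrary valid position $i$, and then check the implication $\ithm{P} = 0 \implies \ith{(\tv{P})} = \ithv{P}$.

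The key step is to observe that the consequent is a reflexive equality: the $i$-th bit of the bit vector $\tv{P}$ is by definition $\ithv{P}$. So the implication holds for any value of the mask bit. There is nothing to case-split, and well-formedness (Definition~\ref{def:wellformed}) is not needed, although the lemma is stated for well-formed $P$. In \rocq{}, this amounts to unfolding \lstinline{ingamma}, introducing $i$, the bound proof, and the hypothesis on the mask bit, and then closing the goal with \lstinline{reflexivity}, since \lstinline{tnum.ith_v} is definitionally \lstinline{bvec_ith} applied to the value field.

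The only potential obstacle is this last point. If \lstinline{tnum.ith_v} is not definitionally equal to \lstinline{bvec_ith (tnum.v P)}, for instance because of how the index-bound proof \lstinline{hidx} is threaded through, then I would first unfold \lstinline{tnum.ith_v}. If the two bound proofs differ syntactically, I would then use proof irrelevance for \lstinline{lt}, via \lstinline{le_unique}, to identify them.
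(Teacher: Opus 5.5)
Your proposal is correct and matches the paper's proof, which simply says the lemma follows from Definition~\ref{def:ingamma}. Taking $p = \tv{P}$, the consequent $\ith{(\tv{P})} = \ithv{P}$ is a reflexive equality, so no case split and no well-formedness are needed; your remarks about unfolding and proof irrelevance in Rocq are harmless implementation detail.
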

\begin{proof}
  Follows from Definition~\ref{def:ingamma}.
\end{proof}

\begin{lemma}
  \label{lem:ingamma_value_bitor_mask}
  $\forall P \in \wfdomn, \ingamma{\orvm{P}}{P}$
\end{lemma}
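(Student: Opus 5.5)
We need to show that for every well-formed $P$, the bit vector $\orvm{P}$ lies in $\gamma(P)$. By Definition~\ref{def:ingamma}, this amounts to proving, for an arbitrary fixed position $i$, that $\ithm{P} = 0$ implies $\ith{(\tv{P} \mathbin{|} \tm{P})} = \ithv{P}$. The plan is to fix $i$, assume $\ithm{P} = 0$, and reduce everything to a single-bit computation.

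The key step is the pointwise characterization of bitwise OR: the $i$-th bit of $\tv{P} \mathbin{|} \tm{P}$ equals $\ithv{P} \lor \ithm{P}$. In the \rocq{} development this should already be available as a helper lemma about \lstinline{bvec_ith} of a bitwise OR; if not, it follows by induction on the length of the bit vectors. Substituting the hypothesis $\ithm{P} = 0$ gives $\ithv{P} \lor 0 = \ithv{P}$, which closes the goal. This is just a case split on the bit $\ithv{P} \in \{0,1\}$.

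Well-formedness (Definition~\ref{def:wellformed}) is not needed for membership itself. It is only a hypothesis of the statement, since $P \in \wfdomn$, and it matters only if one also wants the companion fact that $\orvm{P}$ sets exactly the uncertain bits. Positions with $\ithm{P} = 1$ impose no constraint under Definition~\ref{def:ingamma}, so they need no argument.

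The only potential obstacle is mechanical. One has to unfold \lstinline{bvec_ith} through the definition of bitwise OR on bounded lists, and manage the dependent index proof \lstinline{hidx} so that the two sides refer to the same position. I would first look for, or prove, a lemma of the form ``\lstinline{bvec_ith} of \lstinline{bvec_or x y} at \lstinline{hidx} equals \lstinline{bit_or} of the two $i$-th bits''. After that, the result is a short rewrite followed by \lstinline{destruct} on the value bit.
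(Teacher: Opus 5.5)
Your proof is correct and is essentially the paper's argument: the paper only says the lemma follows from Definition~\ref{def:ingamma}, and your unfolding (fix $i$, assume $\ithm{P} = 0$, and note that the $i$-th bit of $\orvm{P}$ is $\ithv{P} \lor \ithm{P} = \ithv{P}$) is exactly that check made explicit. You are also right that well-formedness plays no role in this particular membership fact.
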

\begin{proof}
  Follows from Definition~\ref{def:ingamma}.
\end{proof}

\begin{lemma}
  \label{lem:ingamma_set_at_mask}
  Consider any $P \in \wfdomn$. If $\ithm{P} = 1$, then $\ingamma{\seti{\tv{P}}}{P}$.
\end{lemma}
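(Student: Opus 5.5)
We unfold Definition~\ref{def:ingamma} and check its condition at every bit position $j$. So we must show: for every position $j$ with $\tm{P}[j] = 0$, the $j$-th bit of $\seti{\tv{P}}$ equals $\tv{P}[j]$. The argument is a two-way case split on whether $j$ is the position $i$ being set.

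\textbf{Case $j \neq i$.} The operation $\seti{\cdot}$ only changes bit $i$, so the $j$-th bit of $\seti{\tv{P}}$ is $\tv{P}[j]$. The required equality therefore holds unconditionally. This is the same reasoning as in Lemma~\ref{lem:ingamma_value}.

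\textbf{Case $j = i$.} The hypothesis $\tm{P}[j] = 0$ would give $\ithm{P} = 0$. This contradicts the assumption $\ithm{P} = 1$, so the implication holds vacuously and nothing needs to be checked at position $i$.

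Well-formedness of $P$ is not needed for membership itself; it is only part of the ambient hypothesis $P \in \wfdomn$. The only step needing care is the supporting fact that $\seti{x}$ leaves bit $j$ unchanged for $j \neq i$. In the \rocq{} development this is a small helper lemma about the bit-vector update function, proved by induction on the vector or by unfolding the definition of \texttt{set}. It also requires handling the two distinct bound proofs for $i$ and $j$, so the comparison should be done on the index values, for instance via proof irrelevance for the bound proofs. Apart from that helper, the proof is a direct case analysis.
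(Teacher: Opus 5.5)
Your proof is correct and matches the paper's argument. Like the paper, you unfold the membership definition and note that $\seti{\tv{P}}$ agrees with $\tv{P}$ away from position $i$, while at position $i$ the mask bit is $1$, so the condition there holds vacuously; your remarks that well-formedness is unused and that the Rocq proof needs a helper lemma about the bit update are accurate details the paper leaves implicit.
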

\begin{proof}
  Follows from Definition~\ref{def:ingamma}. (Note that
  $\seti{\tv{P}}$ is the same as $\tv{P}$, except at position $i$. At
  position $i$, the mask is set, so the value bit is free to be either
  $0$ or $1$ without losing membership.)
\end{proof}

\begin{lemma}
  \label{lem:tnum_add_v_m_is_or}
  $\forall P \in \wfdomn, \tv{P} + \tm{Q} = \tv{P} \mathbin{|} \tm{Q}$
\end{lemma}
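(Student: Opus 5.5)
The statement as typeset mixes $P$ and $Q$, but it is evidently meant as $\tv{P} + \tm{P} = \tv{P} \mathbin{|} \tm{P}$ for every $P \in \wfdomn$. That is the form used later, e.g.\ for $p' = P.v \mathbin{|} P.m$ in the optimality of \lstinline{tnum_add}. I would prove that version. The approach is the standard fact that binary addition of two bit vectors with disjoint supports produces no carries, so the sum agrees bitwise with the OR.

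First, from Definition~\ref{def:wellformed}, for every position $i$ we never have $\ithv{P} = 1$ and $\ithm{P} = 1$ at once. Equivalently, $\tv{P} \mathbin{\&} \tm{P} = 0$.

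Second, I would model addition as ripple-carry, matching how \lstinline{bvec_add} is defined in \rocq{}: a carry sequence with $c_1 = 0$ (using the appendix's 1-indexing), and for each $i$,
\[
\ith{(x+y)} = \ith{x} \oplus \ith{y} \oplus c_i, \qquad
c_{i+1} = \mathrm{maj}(\ith{x}, \ith{y}, c_i).
\]
By induction on $i$ I would show $c_i = 0$ for all $i$ when $x = \tv{P}$ and $y = \tm{P}$. The base case is immediate. In the step, $c_i = 0$ gives $c_{i+1} = \ith{x} \land \ith{y}$, which is $0$ by disjointness.

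With every carry zero, each sum bit is $\ith{x} \oplus \ith{y}$. Because the two bits are never both $1$, this equals $\ith{x} \lor \ith{y}$. Bitwise extensionality then gives the equality, and since no carry is ever generated, wrap-around modulo $2^n$ plays no role.

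The main obstacle is carrying this out over the concrete recursive definition of \lstinline{bvec_add} in \rocq{}. The induction has to be stated over a generalized carry-in, i.e.\ ``if the carry-in is $0$ and the inputs are disjoint, then the result is the OR and the carry-out is $0$.'' It should proceed by structural induction on the bounded list rather than on indices, with well-formedness converted from its index-based \lstinline{bvec_ith} form into a per-head-bit fact at each list cons. If that conversion is awkward, I would instead prove the statement over \lstinline{nat} using \lstinline{bvec_denote}, via the identity $a + b = (a \mathbin{|} b) + (a \mathbin{\&} b)$ with $a \mathbin{\&} b = 0$.
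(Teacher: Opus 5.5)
Your proposal is correct and takes the same approach as the paper: well-formedness makes the value and mask bits of $P$ disjoint, so the addition generates no carries and coincides with bitwise OR. The paper states this in one line, and your carry-by-carry induction simply spells it out; your reading of the stray $\tm{Q}$ as a typo for $\tm{P}$ is also right, since that is how the lemma is used to obtain $\orvm{P}$ in the optimality proof for tnum addition.
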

\begin{proof}
  Due to well-formedness, $\ithv{P}$ and $\ithm{P}$ cannot be both $1$
  for any $i$. For such inputs, addition decomposes to bitwise OR.
\end{proof}

\subsection{Optimality of \lstinline{tnum_add}}

Informally speaking, an abstract arithmetic operation is optimal if it
has no more uncertainty than there is in the concrete result set.

In the case of tnum addition, this means if the abstract sum of two
tnums have $\mu$ at some position $i$, there should be at least one
pair of concrete sums that differ at $i$. We formally state and prove
this in the following lemma, showing \lstinline{tnum_add} from the
Linux kernel is optimal.

\begin{lemma}
  Consider any $P, Q \in \wfdomn$, and a position $i \le n$. If $\ithm{\tadd{P}{Q}} = 1$, then
  \[
  \exists \ingamma{p, p'}{P}, \exists \ingamma{q, q'}{Q} : \ith{(p + q)} \neq \ith{(p' + q')}.
  \]
\end{lemma}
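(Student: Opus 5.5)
The plan is to unfold the mask of $\tadd{P}{Q}$ from Listing~\ref{lst:tnum-add}, namely $\mu = \chi \mathbin{|} \tm{P} \mathbin{|} \tm{Q}$ with $\chi = \sigma \oplus \sv{P}{Q}$ and $\sigma = \sm{P}{Q} + \sv{P}{Q}$. The hypothesis $\ithm{\tadd{P}{Q}} = 1$ then splits into three cases: $\ithm{P} = 1$, $\ithm{Q} = 1$, or $\ith{\chi} = 1$. In each case I will exhibit explicit witnesses $p, p' \in \gamma(P)$ and $q, q' \in \gamma(Q)$. Throughout I will use the fact that the $i$-th bit of a sum modulo $2^n$ depends only on bits $0,\dots,i$ of the summands, so wrap-around never interferes.

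\textbf{Case $\ithm{P} = 1$.} Take $p = \tv{P}$, $p' = \seti{\tv{P}}$ and $q = q' = \tv{Q}$. Membership follows from Lemma~\ref{lem:ingamma_value} and Lemma~\ref{lem:ingamma_set_at_mask}. By well-formedness $\ithv{P} = 0$, so $\seti{\tv{P}} = \tv{P} + 2^i$. Hence $p' + q = (p+q) + 2^i$. Adding $2^i$ leaves bits below $i$ unchanged and flips bit $i$, so $\ith{(p+q)} \neq \ith{(p'+q')}$. \textbf{Case $\ithm{Q} = 1$} is symmetric.

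\textbf{Case $\ith{\chi} = 1$.} Following the choice announced in Section~\ref{sec:tnum-add-rocq}, take $p = \tv{P}$, $q = \tv{Q}$, $p' = \orvm{P}$ and $q' = \orvm{Q}$. Membership comes from Lemmas~\ref{lem:ingamma_value} and~\ref{lem:ingamma_value_bitor_mask}. By Lemma~\ref{lem:tnum_add_v_m_is_or} (read for a single well-formed tnum, $\tv{P} + \tm{P} = \orvm{P}$), we get $p' + q' = (\tv{P}+\tm{P}) + (\tv{Q}+\tm{Q}) = \sv{P}{Q} + \sm{P}{Q} = \sigma$. Also $p + q = \sv{P}{Q}$. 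Since $\ith{\chi} = \ith{\sigma} \oplus \ith{\sv{P}{Q}} = 1$, the two sums differ at bit $i$.

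The main obstacle is the arithmetic bookkeeping rather than the case logic. Two steps need care. The first is justifying Lemma~\ref{lem:tnum_add_v_m_is_or}, i.e.\ that there are no carries when value and mask bits are disjoint; I would prove it by induction on the width with carry-in $0$. The second is the reassociation of the four-term sum modulo $2^n$ and the ``adding $2^i$ flips exactly bit $i$'' fact in the first two cases. In \rocq{} I would route both through \lstinline{bvec_denote} and \lstinline{nat} modular arithmetic, using the correctness lemma for \lstinline{bvec_add}, rather than reasoning about carries bitwise.
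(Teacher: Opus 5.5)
Your proposal is correct and matches the paper's proof: it splits into the same three cases $P.m[i]=1$, $Q.m[i]=1$, $\chi[i]=1$ and uses identical witnesses in each. The only difference is cosmetic: in the $\chi$ case you identify $p'+q'$ with $\sigma$ and read off $\sigma[i] \neq (P.v+Q.v)[i]$ directly from $\chi = \sigma \oplus (P.v+Q.v)$, which folds the paper's Cases 3.1 and 3.2 into a single step.
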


\begin{proof}
  There are exactly three cases by which $\ithm{\tadd{P}{Q}}$ can be
  $1$. Either $\ithm{P} = 1$, $\ithm{Q} = 1$, or $\ithm{\chi} = 1$.
  We consider these three cases individually and show that there exist
  $\ingamma{p, p'}{P}$ and $\ingamma{q, q'}{Q}$ such that $\ith{(p +
    q)} \neq \ith{(p' + q')}$. (Note that either of $p \neq p'$ or $q
  \neq q'$ must hold, but both need not be true.)

  \Case{1}{$\ithm{P} = 1$}. Taking $p = \tv{P}$, $p' = \seti{\tv{P}}$,
  and $q = q' = \tv{Q}$ satisfies our goal. Their membership follows
  from Lemma~\ref{lem:ingamma_value} and
  Lemma~\ref{lem:ingamma_set_at_mask}. We know that $\ithv{P} = 0$
  since {$\ithm{P} = 1$}. Hence $p$ and $p'$ differ at (and only at)
  position $i$, and $\ith{(p + q)} \neq \ith{(p' + q')}$.

  \Case{2}{$\ithm{Q} = 1$}. Similar to the previous case, we take $q =
  \tv{Q}$, $q' = \seti{\tv{Q}}$, and $p = p' = \tv{P}$.

  \Case{3}{$\ithm{\chi} = 1$}. Recall that $\chi = (\sv{P}{Q} +
  \sm{P}{Q}) \oplus \sv{P}{Q}$. Since we know that $\ithm{\chi} = 1$,
  we have two possibilities here:

  \Case{3.1}{$\ith{(\sv{P}{Q} + \sm{P}{Q})} = 1, \ithv{P} = \ithm{P} = \ith{\sv{P}{Q}} = 0$}.

  After a basic rearranging and application of
  Lemma~\ref{lem:tnum_add_v_m_is_or}, $\ith{(\orvm{P} + \orvm{Q})} =
  1$. We know that $\ingamma{\orvm{P}}{P}$ and $\ingamma{\orvm{Q}}{Q}$
  (by Lemma~\ref{lem:ingamma_value_bitor_mask}). Hence we can pick
  $\ingamma{p, p'}{P}$ and $\ingamma{q, q'}{Q}$ in the following
  manner to satisfy $\ith{(p + q)} \neq \ith{(p' + q')}$:

  \begin{align*}
    p &= \tv{P} \\
    q &= \tv{Q} \\
    p' &= \orvm{P} \\
    q' &= \orvm{Q}
  \end{align*}

  \Case{3.2}{$\ith{\sv{P}{Q}} = 1, \ith{(\sv{P}{Q} + \sm{P}{Q})} =
    \ithv{P} = \ithm{P} = 0$}. Same assignments for $p$, $p'$, $q$ and
  $q'$ work in this case as well.
\end{proof}

As a side note, it is interesting to note that $\tv{P}$ is the
smallest concrete value and $\orvm{P}$ is the largest concrete value
represented by $P$, for any $P$.

\subsection{Optimality of \lstinline{tnum_union}}

The concrete union of two tnums $P$ and $Q$ is $\gamma(P) \cup
\gamma(Q)$. This can be rewritten like in the following:

\begin{definition}
  \[
  \text{concrete\_union}(P, Q) = \{ x \mathbin{|} \ingamma{x}{P} \lor \ingamma{x}{Q} \}
  \]
\end{definition}

\begin{lemma}
  Consider any $P, Q \in \wfdomn$, and a position $i \le n$. If $\ithm{\tunion{P}{Q}} = 1$, then
  \[
  \exists x, y \in \text{concrete\_union}(P, Q) : \ith{x} \neq \ith{y}.
  \]
\end{lemma}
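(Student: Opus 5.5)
The proof is a case analysis on the $i$-th mask bits of $P$ and $Q$, mirroring the table given earlier in Section~\ref{sec:tnum-union-sound}. First I will unfold the definition of \lstinline{tnum_union} at position $i$. Its mask bit is
\[
\ithm{\tunion{P}{Q}} = (\ithv{P} \oplus \ithv{Q}) \lor \ithm{P} \lor \ithm{Q}.
\]
Since this bit is $1$ by hypothesis, at least one of three conditions holds: $\ithm{P} = 1$, or $\ithm{Q} = 1$, or $\ithv{P} \neq \ithv{Q}$. I will split on the pair $(\ithm{P}, \ithm{Q})$ and, in each case, exhibit two elements of $\text{concrete\_union}(P,Q)$ that differ at position $i$.

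\Case{1}{$\ithm{P} = 1$ (whatever $\ithm{Q}$ is)}. I take $x = \tv{P}$ and $y = \orvm{P}$. Both lie in $\gamma(P)$ by Lemma~\ref{lem:ingamma_value} and Lemma~\ref{lem:ingamma_value_bitor_mask}, hence in the concrete union. Well-formedness (Definition~\ref{def:wellformed}) gives $\ithv{P} = 0$, so $\ith{x} = 0$. On the other hand $\ith{y} = \ithv{P} \lor \ithm{P} = 1$, so the two differ at $i$. Alternatively, $y = \seti{\tv{P}}$ works via Lemma~\ref{lem:ingamma_set_at_mask}.

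\Case{2}{$\ithm{P} = 0$, $\ithm{Q} = 1$}. This is symmetric to Case~1: take $x = \tv{Q}$ and $y = \orvm{Q}$, using membership in $\gamma(Q)$.

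\Case{3}{$\ithm{P} = \ithm{Q} = 0$}. Here the mask bit of the union can only be $1$ through the XOR term, so $\ithv{P} \neq \ithv{Q}$. I take $x = \tv{P} \in \gamma(P)$ and $y = \tv{Q} \in \gamma(Q)$, both via Lemma~\ref{lem:ingamma_value}. Then $\ith{x} = \ithv{P} \neq \ithv{Q} = \ith{y}$.

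The only delicate point is the first step: extracting the disjunction of the three conditions from the word-level formula for the union's mask. The returned mask is $m$ itself (only the value is masked with $\sim m$), so its $i$-th bit is exactly the bitwise expression above. This step reduces to bit-level lemmas for AND, OR and XOR on \lstinline{bvec}. With those in hand, each case is a direct check. In Case~3 one must remember to derive $\ithv{P}\neq\ithv{Q}$ from the hypothesis after both mask bits are known to be $0$.
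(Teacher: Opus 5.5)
Your proposal is correct and is essentially the paper's proof: the same case analysis on $\ithm{P}$ and $\ithm{Q}$, with the same witnesses. These are $x=\tv{P}$, $y=\tv{Q}$ when both mask bits are $0$, and $x=\tv{P}$, $y=\orvm{P}$ (respectively $x=\tv{Q}$, $y=\orvm{Q}$) when the corresponding operand is uncertain at position $i$. The only difference is cosmetic: you merge the paper's cases $\ithm{P}=1,\ithm{Q}=0$ and $\ithm{P}=\ithm{Q}=1$ into one, and the paper itself notes that these two cases can use the same witnesses.
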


\begin{proof}
  The proof progresses on a case analysis of $\ithm{P}$ and $\ithm{Q}$.

  \Case{1}{$\ithm{P} = \ithm{Q} = 0$}. Since we know that
  $\ithm{\tunion{P}{Q}} = 1$, by the definition of
  \lstinline{tnum_union}, $\ithv{P} \oplus \ithv{Q} = 1$.  This means
  $\ithv{P} \neq \ithv{Q}$, and since $\ingamma{\tv{P}}{P} \land
  \ingamma{\tv{Q}}{Q}$, we pick $x = \tv{P}$ and $y = \tv{Q}$,
  satisfying our goal.

  \Case{2}{$\ithm{P} = 0, \ithm{Q} = 1$}. In this case, $\gamma({Q})$
  itself has two elements that disagree at the $i$-th bit: $x =
  \tv{Q}$ and $y = \orvm{Q}$ (membership of the latter,
  $\ingamma{y}{Q}$, is shown by
  Lemma~\ref{lem:ingamma_value_bitor_mask}).

  \Case{3}{$\ithm{P} = 1, \ithm{Q} = 0$}. Similar to the previous
  case. We pick $x = \tv{P}$ and $y = \orvm{P}$.

  \Case{4}{$\ithm{P} = \ithm{Q} = 1$}. We can reuse the choices for
  $x$ and $y$ from either Case 2 or Case 3 in this case.
\end{proof}

\subsection{Soundness of \lstinline{tnum_mul}}

\newcommand{\bmulshrId}{\text{bvec\_mul\_loop}}
\newcommand{\bmulshrABC}{\bmulshr{a}{b}{c}}
\newcommand{\bmulshrABCp}{\bmulshr{a'}{b'}{c'}}
\newcommand{\bmulshr}[3]{$\bmulshrId{}(#1, #2, #3)$}

\newcommand{\tmulshrId}{\text{tnum\_mul\_loop}}
\newcommand{\tmulshr}[3]{$\tmulshrId{}(#1, #2, #3)$}
\newcommand{\tmulshrABC}{\tmulshr{A}{B}{C}}
\newcommand{\tmulshrABCp}{\tmulshr{A'}{B'}{C'}}
\newcommand{\tmulId}{\text{tnum\_mul}}
\newcommand{\tmulAB}{$\tmulId{}(A, B)$}

We model the while loop from our algorithm as a recursive function
\tmulshrABC{}, where $C$ is the partial accumulator. Note that $A$ is
a well-formed tnum of length $m \ge 0$, and the other two arguments
are well-formed tnums of length $n > 0$. \tmulAB{}, where $A$ and $B$
are well-formed tnums of equal length, can be defined as
\tmulshr{A}{B}{\text{TNUM}(0, 0)}.

Apart from taking the partial accumulator as an input argument,
\tmulshrId{} employs unpadded right-shift, reducing the length of the
multiplier $A$ by one (this does not affect its value compared to the
zero-padded right shift used in the actual algorithm). It is the
truncated multiplier that gets passed to the recursive call
corresponding to the next iteration of the while loop. This truncation
and the presence of the partial accumulator as an argument helps us
present a soundness proof using induction.

Now, the main goal of the proof becomes the following:

\begin{lemma}
  Consider a well-formed tnum $A$ and a bit vector $\ingamma{a}{A}$,
  both of length $m \ge 0$. Also consider well-formed tnums $B$, $C$,
  and bit vectors $\ingamma{b}{B}, \ingamma{c}{C}$, all of length $n >
  0$. Then

  \[
  \ingamma{\bmulshrId(a, b, c)}{\tmulshrId(A, B, C)}
  \]

  where \bmulshrId{}~\footnote{Proven to be correct against the
  built-in multiplication of natural numbers in Rocq.} is a bit-vector
  analog of \tmulshrId{}.
\end{lemma}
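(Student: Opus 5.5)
We will prove the statement by induction on $m$, the length of the multiplier $A$ (and of $a$), generalizing over $B$, $C$, $b$, $c$, since these change in each recursive call. To make the induction go through, we fix the definitions so that $\bmulshrId(a,b,c)$ and $\tmulshrId(A,B,C)$ unfold in lockstep. When $m=0$, both return their accumulator. When $m>0$, they inspect the least significant position, update the accumulator, and recurse on the truncated multiplier, the left-shifted multiplicand and the new accumulator. The concrete function sets $c' = c + b$ if $a[0]=1$ and $c' = c$ otherwise. It then recurses on $(a', \Lshift{1}{b}, c')$, where $a'$ is $a$ with its lowest bit dropped.

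\basecase{$m = 0$.} Both functions return their accumulator argument unchanged, so the goal $\ingamma{c}{C}$ is exactly a hypothesis.

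\inductionStep{} Assume the claim for multipliers of length $m$, for all well-formed $B, C$ and all $\ingamma{b}{B}$, $\ingamma{c}{C}$. Take $A$ of length $m+1$, and write $A'$, $a'$ for the truncated multiplier and vector, and $B' = \Lshift{1}{B}$, $b' = \Lshift{1}{b}$. We establish three facts and then apply the induction hypothesis.

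\emph{The multiplier.} $A'$ is well-formed and $\ingamma{a'}{A'}$. This holds because both are obtained by dropping position $0$, and Definition~\ref{def:ingamma} and Definition~\ref{def:wellformed} are pointwise. Position $j$ of $A'$ is position $j+1$ of $A$.

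\emph{The multiplicand.} $B'$ is well-formed and $\ingamma{b'}{B'}$. This is the soundness of the tnum left shift, which we prove separately. Shifting value and mask together preserves the pointwise relations, and the vacated position gets mask $0$ and value $0$, matching the $0$ shifted into $b$.

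\emph{The accumulator.} Let $C'$ be the new tnum accumulator. We show $C'$ is well-formed and $\ingamma{c'}{C'}$ by case analysis on $A.m[0]$ and $A.v[0]$.
\begin{itemize}
\item If $A.m[0]=0$, membership forces $a[0] = A.v[0]$, so the two functions take matching branches. If $A.v[0]=1$, then $C' = \tadd{C}{B}$ and $c' = c+b$, and the claim follows from \lstinline{tnum_add_sound} (the \lstinline{sound2} statement). If $A.v[0]=0$, then $C'=C$ and $c'=c$, and the claim is immediate.
\item If $A.m[0]=1$, well-formedness gives $A.v[0]=0$, and $C' = \tunion{C}{\tadd{C}{B}}$, while $a[0]$ may be either bit. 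By \lstinline{tnum_add_sound}, $\tadd{C}{B}$ is well-formed and contains $c+b$. By \lstinline{tnum_union_sound}, the union is well-formed and contains both $C$ and $\tadd{C}{B}$ as subsets. So whichever value $c'$ takes ($c$ or $c+b$), it lies in $\gamma(C')$ by the \lstinline{subset} relation.
\end{itemize}
With these three facts, the induction hypothesis applied to $A', B', C'$ and $a', b', c'$ gives the goal after unfolding one step of each function.

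The main obstacle is bookkeeping rather than mathematics. The concrete and abstract loops must branch in a way that can be aligned: the abstract function branches on $A.m[0]$ and $A.v[0]$, the concrete one on $a[0]$. The case where the multiplier trit is $\mu$ is the only place where the branches genuinely diverge, and it is exactly where union soundness is needed. A further technical point is that the induction must be stated with $B$, $C$, $b$, $c$ universally quantified, and that dependent indices ($i < n$ versus positions in the truncated $A'$) must be shifted by one consistently.
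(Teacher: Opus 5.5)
Your proof is correct and follows the paper's argument essentially step for step. Both go by induction on the length $m$ of the multiplier, discharge the base case with $c \in \gamma(C)$, and reduce the induction step to membership of the shifted multiplier, the shifted multiplicand and the updated accumulator, the last via the soundness of \texttt{tnum\_add} and \texttt{tnum\_union} (with both $c$ and $c+b$ landing in the union when $A.m[0]=1$); you are merely more explicit than the paper about generalizing over $B, C, b, c$, propagating well-formedness, and the trivial case $A.m[0]=A.v[0]=0$.
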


\begin{proof}
  The proof progresses by induction on $m$, the length of $A$.

  \basecase{$m = 0.$} For a zero-length multiplier, both the tnum and
  bit vector multiplication algorithms return the partial accumulator
  as the final product. This satisfies our goal due to the
  precondition $\ingamma{c}{C}$.

  \inductionStep{} When $m > 0$, \tmulshrABC{} is \tmulshrABCp{},
  where $C'$ is the updated accumulator obtained using tnum addition
  and union as needed, and $A'$ and $B'$ are the shifted versions of
  the multiplication operands. Similarly, \bmulshrABC{} becomes
  \bmulshrABCp{}. So, we need to show the membership of \bmulshrABCp{}
  in \tmulshrABCp{}. But this will follow from the induction
  hypothesis if we can show $\ingamma{a'}{A'}$, $\ingamma{b'}{B'}$,
  and $\ingamma{c'}{C'}$. The first two memberships follow from the
  soundness of tnum shifts. But $\ingamma{c'}{C'}$ needs discussion
  because the calculation of $C$ is not straightforward.

  There are only two cases to consider: $\lsb{\tv{A}}
  = 1$ and $\lsb{\tm{A}} = 1$.
  In the first case, $C' = \text{tnum\_add}(C, B)$, where the bit
  vector multiplication algorithm computes $c' = \text{bvec\_add}(c,
  b)$. $\ingamma{c'}{C'}$ due to the soundness of
  \text{tnum\_add}.

  \newcommand{\newCforMu}{\text{tnum\_union}(C, \text{tnum\_add}(C, B))}

  When $\lsb{\tm{A}} = 1$, the tnum multiplication algorithm computes
  $C' = \newCforMu{}$. Since $\lsb{a}$ can be either $0$ or $1$ in
  this case, we need to show the following, considering two different
  possible executions of \bmulshrId{}:

  \begin{enumerate}
  \item $\ingamma{c}{\newCforMu{}}$ (case: $\lsb{a} = 0$)
  \item $\ingamma{\text{bvec\_add}(c, b)}{\newCforMu{}}$ (case:
    $\lsb{a} = 1$)
  \end{enumerate}

  Both are satisfied by the memberships given in the preconditions,
  the soundness of \text{tnum\_add}, and the soundness of
  \text{tnum\_union}.

  Since we showed that $\ingamma{a'}{A'}$, $\ingamma{b'}{B'}$, and
  $\ingamma{c'}{C'}$, our original goal follows from the induction
  hypothesis.
\end{proof}

\end{document}